\documentclass[11pt]{article}

\usepackage{theorem}
\usepackage{amsmath,amssymb,amsfonts}
\usepackage{latexsym}
\usepackage{mathrsfs}
\usepackage{lscape}
\usepackage{enumerate}
\usepackage{hyperref}

  \newtheorem{theorem}{Theorem}[section]
  \newtheorem{corollary}[theorem]{Corollary}
   
  \newtheorem{lemma}[theorem]{Lemma}
  
  \newtheorem{definition}[theorem]{Definition}
  \newtheorem{example}[theorem]{Example}
  \newtheorem{remark}[theorem]{Remark}
  \numberwithin{equation}{section}

\newenvironment{proof}{\noindent  Proof:\ }{\hspace*{\fill} $\Box $\\}

  % Buchstaben mit Doppelstrich
\def\N{\mathbb{N}}
\def\Z{\mathbb{Z}}
\def\F{\mathbb{F}}
\def\P{\mathbb{P}}

% other commands
\def\ord{\mathrm{ord}}

\def\codeC{\mathscr{C}}

\title{On quasi-cyclic subspace codes}

\author{
Ismael Gutierrez Garc\'ia$^{1}$ and Ivan Molina Naizir$^{2}$\\ 
\small{Department of Mathematics and Statistics}\\
\small{Universidad del Norte - Barranquilla, Colombia}\\
\small{isgutier@uninorte.edu.co$^{1}$}\\
\small{inaizir@uninorte.edu.co$^{2}$}}

\begin{document}
\maketitle

\begin{abstract}
Construction of subspace codes with good parameters is one of the most important problems in random network coding. In this paper we present first a generalization of the concept of cyclic subspaces codes and further we show that the usual methods for constructing cyclic subspace codes over finite fields works for $m$-quasi cyclic codes, namely the subspaces polynomials and Frobenius mappings.
\end{abstract}

\textbf{Keywords.}
Finite fields, subspace codes, orbits, quasi-orbits, cyclic and quasi cyclic subspace Codes, Frobenius Mappings.

\section{Introduction}
Random Network coding is a new research area in information theory that have interesting applications in practical networking systems, like peer-to-peer content distribution network, bidirectional traffic in a wireless network, residential wireless mesh networks, Ad-hoc sensor networks, and others \cite{Fragouli}. 

In the seminal work \cite{Koetter1}, R. K\"otter and F. R. Kschischang have developed the theory of subspace codes for applications in network coding, 
where it was proved that constant dimension subspaces codes can be used for detection and correction of errors during a packet transmission in a network. 
Similar as in classical coding theory, there are two main branches of research
in random network coding: the existence and construction of optimal codes and the  design and implementation of efficient encoding and decoding algorithms for a given network code.

Cyclic subspace codes were first presented by A. Kohnert and S. Kurz in \cite{Kurz} from the point of view of the theory of designs over finite fields. Later T. Etzion and A. Vardy in \cite{Etzion2} have defined them as a $q$-version of cyclic code from the classical coding theory. Recently T. Etzion et al. have presented in \cite{Etzion1} new methods for constructing such codes, which includes linearized polynomials, namely subspaces polynomials and  Frobenius mappings. J. Rosenthal et al. \cite{Rosenthal} and H. Gluesing et al. \cite{Gluesing} studied cyclic codes from the point of view of groups actions. Specifically, they have used an action of the general linear group $\mathrm{GL}(n,q)$ over a Grassmannian $G_q(n,k)$ to define them. These codes were called cyclic orbits codes. It can easily see that cyclic subspaces codes are a special case of orbits codes. 

In this paper we present the definition of $m$-quasi cyclic subspace codes as a natural generalization of cyclic codes and we show that the techniques used for the construction of cyclic codes in \cite{Etzion1} also works for $m$-quasi cyclic codes. 

\section{Preliminaries}
Let $\F_q^n$ be the $n$-dimensional vector space over  the finite field, with $q$ elements, $\F_q$ (where $q$ is a prime power). We denote with $\P_q(n)$ the projective space of order $n$, that is, the set of all subspaces of $\F_q^n$, including the null space and $\F_q^n$ itself. For a fixed natural number $k$, with $0\leq k\leq n$ we denote with $G_q(n,k)$ the set of all subspaces of $\F_q^n$ of dimension $k$ and we call it the $k$-Grassmannian over $\F_q$ or Grassmannian in short. Then we have
\[\P_q(n) = \bigcup_{k=0}^n G_q(n,k).\]
The cardinality of the set $G_q(n,k)$ is given by the $q$-ary gaussian coefficient ${n\brack k}_q$. It is well known that
\[{n\brack k}_q = \frac{(q^n-1)(q^{n-1}-1)\cdots (q^{n-k+1}-1)}{(q^k-1)(q^{k-1}-1)\cdots (q-1)}.\]
Let $\Z_+$ denote the set of non-negative integers. The set $\P_q(n)$ endowed with the distance $d : \P_q(n)\times \P_q(n) \longrightarrow \Z_+$ defined by
\begin{align*}
d(U, V) &= \dim(U + V) - \dim(U\cap V)\\
           &= \dim(U) + \dim(V) - 2 \dim(U\cap V)
\end{align*}
is a metric space. This distance $d$ is called the \texttt{subspace distance}. A \texttt{subspace code} $\codeC$ is a non empty subset of $\P_q(n)$ and as usually the elements of $\codeC$ are called \texttt{codewords}. Constant dimension codes in network coding are the analogues of constant weight codes in classical coding theory. A \texttt{constant dimension code} or \texttt{grassmannian code} $\codeC$ is just a non empty subset of $G_q(n,k)$. The \texttt{minimum distance} $d(\codeC)$ of a subspace code $\codeC$ is defined as usually, that is, as the smallest distance between any two different codewords.

Let $\codeC$ be a subspace code of minimum distance $d$. Then we say that $\codeC$ is a $[n,|\codeC|,d]$-code over $\F_q$ and $[n,|\codeC|,d]$ are its parameters. If $\codeC$ is a grassmannian code and it has minimum distance $d$, then we say that $\codeC$ is a $[n,k,|\codeC|,d]$-code over $\F_q$ and its parameters are given by $[n,k,|\codeC|,d]$. Notice that in this case, if $U, V\in \codeC$, then 
\[d(U,V) = 2k-2\dim(U\cap V).\] 
Thus $d(\codeC)$ is always an even number.

Let $\mathcal{A}_q(n,d)$, respectively $\mathcal{A}_q(n,d,k)$, denote the maximum number of codewords in an $[n,|\codeC|,d]$-code in $\P_q(n)$, respectively $[n,k,|\codeC|,d]$-code in $G_q(n,k)$. Due the minimum distance for a constant dimension code is always an even number, it suffices to consider $\mathcal{A}_q(n,d,k)$ for $d=2\delta$, for some natural number $\delta$. 
T. Etzion and A. Vardy established in \cite{Etzion2} the following bound:
\begin{equation}\label{bound1}
\mathcal{A}_q(n,2 \delta+2,k)\leq  \frac{{n \brack k}_q}{{n-k+\delta \brack \delta}_q}.
\end{equation}

Let $\F_{q^n}$ be the extension field of $\F_q$ (of degree $n$). It is well known that we may regard $\F_{q^n}$ as a vector space of dimension $n$ over $\F_q$. That is, for a fixed basis, we can identifier every element of $\F_{q^n}$ with a $n$-tuple of elements in $\F_q$. Let $\gamma$ be a primitive element of $\F_{q^n}$. A subspace code $\codeC\subseteq \P_q(n)$ is called \texttt{cyclic}, if it has the following property: 
\[\{0,\gamma^{i_1},\gamma^{i_2},\ldots, \gamma^{i_s}\}\in \codeC \Rightarrow \{0,\gamma^{i_1+1},\gamma^{i_2+1}, \ldots, \gamma^{i_s+1}\}\in \codeC.\]
(Assuming that $s=q^k-1$, with $k$ the dimension of the codeword).

For example, let $\gamma$ be a primitive root of $x^8+x^4+x^3+x^2+1$ and use this polynomial to generate the field $\F_{2^{8}}$. Let $\codeC$ be the constant dimension code in $G_2(8,3)$, which consists of all cyclic shifts of
	\begin{align*}
		& \{ \gamma^{0}, \gamma^{52}, \gamma^{71}, \gamma^{109}, \gamma^{135}, \gamma^{141}, \gamma^{144}\}\\
		& \{ \gamma^{0}, \gamma^{31}, \gamma^{45}, \gamma^{65}, \gamma^{87}, \gamma^{162}, \gamma^{167}\}\\
		& \{ \gamma^{0}, \gamma^{62}, \gamma^{69}, \gamma^{79}, \gamma^{90}, \gamma^{130}, \gamma^{174}\}\\
		& \{ \gamma^{0}, \gamma^{58}, \gamma^{60}, \gamma^{107}, \gamma^{108}, \gamma^{132}, \gamma^{161}\}\\
		& \{ \gamma^{0}, \gamma^{16}, \gamma^{46}, \gamma^{59}, \gamma^{82}, \gamma^{137}, \gamma^{145}\}. 
	\end{align*}
Then $\codeC$ is a $[8,1275,4,3]$-code and using \eqref{bound1} holds:
\[1275 \leq \mathcal{A}_2(8,4,3) \leq 1542.\]
It follows that $\codeC$ is optimal among cyclic codes. In \cite{Etzion2}, \cite{Kurz} and recently in \cite{Gutierrez-Molina} there were found several examples of cyclic subspaces codes with small parameters. 

Given a binary cyclic code $\codeC\subseteq G_2(n,k)$ and a $V\in \codeC$, we associate the corresponding binary \texttt{characteristic vector} $x_V = (x_0, x_1, \ldots, x_{2^k-2})$ as follows: 
\[x_j = \begin{cases}
1 & \text{if} \ \gamma^j \in V\\
0 & \text{if} \ \gamma^j \notin V.
\end{cases}\]
Then the set of all such characteristic vectors is closed under cyclic shifts. Note that the property of being cyclic does not depend on the choice of a primitive element in $\F_q$. This concept is an useful tool, for example, to calculate easily the intersection of two subspaces and of course its dimension.

Let $\alpha\in \F_{q^n}^\ast$ and $V\in G_q(n,k)$. The \texttt{cyclic shift} or the \texttt{orbit} of $V$ is defined as follows:
\[\alpha V := \{\alpha v\mid v\in V\}.\] 
Then a subspace code $\codeC\subseteq G_q(n,k)$ is called \texttt{cyclic}, if for all $\alpha\in \F_{q^n}^\ast$ and all subspace $V\in \codeC$ we have $\alpha V\in \codeC$. That is, $\{\alpha V\mid \alpha\in \F_{q^n}^\ast\}\subseteq \codeC$, for all $V\in \codeC$. It is known that, if $V\in G_q(n,k)$, then 
\begin{equation}\label{orbita}
|\{\alpha V\mid \alpha\in \F_{q^n}^\ast\}| = \tfrac{q^n-1}{q^t-1},
\end{equation}
for some natural number $t$, which divides $n$. As an immediate consequence we have that the maximum size of an orbit is reached when $t = 1$. In this case we say that $V\in G_q(n,k)$ has a \texttt{full length orbit}. Otherwise $V$ has a \texttt{degenerate orbit}. It is clear that the set $\alpha V$ is again a subspace with the same dimension as $V$. If for $\alpha, \beta\in \F_{q^n}^\ast$ holds that $\alpha V\neq \beta V$, then we say that these cyclic shifts are \texttt{distinct}. 

Let $\gamma$ be a primitive element of $\F_{q^n}$ and $m$ a natural number with $m\mid (q^n-1)$. 
A subspace code $\codeC$ is called $m$-\texttt{quasi cyclic}, if holds the following property: 
\[\{0,\gamma^{i_1},\gamma^{i_2},\ldots, \gamma^{i_k}\}\in \codeC \Rightarrow \{0,\gamma^{i_1+m},\gamma^{i_2+m}, \ldots, \alpha^{i_k+m}\}\in \codeC.\]

Let $V\in G_q(n,k)$. The \texttt{$m$-quasi cyclic shift} or the \texttt{$m$-quasi orbit} of $V$ is defined by
\[\alpha^m V := \{\alpha^m v\mid v\in V\}.\] 
Then a constant dimension subspace code $\codeC$ is called $m$-\texttt{quasi cyclic}, if for all $\alpha\in \F_{q^n}^\ast$ and all subspace 
$V\in \codeC$ we have $\alpha^m V\in \codeC$. That is, $\{\alpha^m V\mid \alpha\in \F_{q^n}^\ast\}\subseteq \codeC$, for all $V\in \codeC$. 

The rest of this paper is organized in three sections. In section 3 we consider subspaces polynomial, Frobenius mappings and their connection with $m$-quasi cyclic shift of a given subspace. In section 4 we present a generalization of cyclic subspaces codes, namely the $m$-quasi cyclic codes and we show some properties of them. Conclusions and future works are presented in section fifth.

\section{Subspace Polynomials and Frobenius mappings}

\begin{definition}
	A polynomial over $\F_{q^n}$ of the form
	\[L(x) = \sum_{j=0}^k a_j x^{q^j} \]
	is usually called \texttt{linearized polynomial}.
\end{definition}

Such special kinds of polynomials  were firstly studied by O. Ore in \cite{Ore}. They play an important role in classic coding theory \cite[Chapter 4]{MacWilliams}, 
in addition, over the past decade it has been important also in random network coding \cite{Koetter1}, \cite{Koetter2}.

If in a context $q$ is fixed, then we use $[j]$ to denote $q^j$. In this notation, a linearized polynomial over the extension field $\F_{q^n}$ can be written as
\[L(x) = \sum_{j=0}^k a_j x^{[j]}.\]

\begin{theorem}\cite[Theorem 3.50]{Lidl1}\label{3.50}
	Let $L(x)$ be a nonzero linearized polynomial over $\F_{q^n}$ and let the extension field $\F_{q^r}$ of $\F_{q^n}$ containing all the roots of $L(x)$. 
	Then each root of $L(x)$ has the same multiplicity, which is either one or a power of $q$, and the roots form a linear subspace of 
	$\F_{q^r}$, where $\F_{q^r}$ is regarded as a vector space over $\F_q$.
\end{theorem}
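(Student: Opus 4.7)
The plan is to split the statement into three parts: (i) the set of roots forms an $\F_q$-subspace of $\F_{q^r}$; (ii) all roots share a common multiplicity; and (iii) that common multiplicity is either $1$ or a positive power of $q$. Part (i) follows from general linear-algebra nonsense once $L$ is identified as an $\F_q$-linear operator; parts (ii) and (iii) require a formal derivative argument together with a factoring trick to reduce to the separable case.

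For (i), I would first observe that the Frobenius $\sigma : x \mapsto x^q$ is additive in characteristic $p$ (the binomial coefficients $\binom{p}{i}$ vanish mod $p$ for $0<i<p$) and fixes $\F_q$ pointwise, so every iterate $\sigma^j : x \mapsto x^{[j]}$ is $\F_q$-linear. Hence $L = \sum_{j=0}^k a_j \sigma^j$, viewed as a map $\F_{q^r} \to \F_{q^r}$, is $\F_q$-linear, and the zero set of $L$ is its kernel, automatically an $\F_q$-subspace. This disposes of the subspace claim.

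For (ii) and (iii), I would use the formal derivative. Since $\frac{d}{dx}(x^{[j]}) = q^j x^{q^j-1}$ vanishes mod $p$ for $j \geq 1$, one gets $L'(x) = a_0$. If $a_0 \neq 0$, then $\gcd(L,L')$ is a unit, so $L$ is separable and every root has multiplicity $1$. Otherwise let $h \geq 1$ be the smallest index with $a_h \neq 0$ and set
\[
M(y) \;=\; \sum_{i=0}^{k-h} a_{i+h}\, y^{[i]},
\]
so that $M$ is linearized with nonzero constant term $a_h$ and $L(x) = M(x^{[h]})$. By the previous case $M$ has simple roots $\beta_1,\dots,\beta_s$, and in characteristic $p$ we have $x^{[h]}-\beta_\ell = (x-\gamma_\ell)^{[h]}$, where $\gamma_\ell$ is the unique $q^h$-th root of $\beta_\ell$ in the algebraic closure. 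Therefore
\[
L(x) \;=\; a_k \prod_{\ell=1}^{s} (x-\gamma_\ell)^{[h]},
\]
which shows every root of $L$ has the same multiplicity $q^h$, a power of $q$.

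The only genuinely delicate step is the repeated-root case: one has to verify that the $\gamma_\ell$ are pairwise distinct (which is immediate because the $\beta_\ell$ are distinct and $y \mapsto y^{[h]}$ is injective on the algebraic closure) and that the factorization above accounts for the full degree $q^h \cdot s = [k]$ of $L$, so no hidden extra roots appear. Once these bookkeeping points are verified, parts (i)--(iii) fit together to give exactly the statement, with the bonus formula $|\ker L| = q^{s}$ when all roots lie in the prescribed $\F_{q^r}$.
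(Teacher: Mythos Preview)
The paper does not actually prove this theorem; it merely quotes it from Lidl--Niederreiter \cite[Theorem~3.50]{Lidl1} and uses it as a black box, so there is no ``paper's own proof'' to compare against. Your argument is correct and is essentially the standard textbook proof: $\F_q$-linearity of Frobenius gives the subspace structure, the formal derivative $L'(x)=a_0$ handles separability when $a_0\neq 0$, and the substitution $L(x)=M(x^{[h]})$ together with the bijectivity of Frobenius on the algebraic closure reduces the general case to the separable one with uniform multiplicity $q^h$.

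One small notational slip: in your final ``bonus formula'' you write $|\ker L| = q^s$, but with $s$ denoting the number of distinct roots $\beta_1,\dots,\beta_s$ of $M$ (equivalently the number of distinct roots $\gamma_\ell$ of $L$), the kernel has exactly $s$ elements, not $q^s$. What is true is that $s$ is itself a power of $q$, since the kernel is an $\F_q$-subspace; if you want a formula of that shape you should write $|\ker L| = q^{\dim_{\F_q}\ker L}$, which is a tautology, or note that $s = q^{k-h}$ from the degree count. This does not affect the proof of the theorem itself.
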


Next theorem is also a partial converse of the previous theorem, which will be used many times in this work.

\begin{theorem}\cite[Theorem 3.52]{Lidl1}\label{3.52}
	Let $V$ be a linear subspace of $\F_{q^n}$ considered as a vector space over $\F_q$. Then for any nonnegative integer $k$ the polynomial 
	\[L(x) := \prod_{v\in V} (x-v)^{[k]}\]
	is a linearized polynomial over $\F_{q^n}$.
\end{theorem}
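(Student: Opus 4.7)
The plan is to reduce the statement to the case $k=0$ via the Frobenius endomorphism, and then to prove that case by induction on $d := \dim_{\F_q} V$.

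For the reduction, I write
\[L(x) = \Bigl(\prod_{v\in V}(x-v)\Bigr)^{q^k} = L_0(x)^{q^k}, \qquad L_0(x) := \prod_{v\in V}(x-v).\]
If I establish that $L_0(x) = \sum_{j} a_j x^{q^j}$ is linearized over $\F_{q^n}$, then additivity of the $q^k$-th power Frobenius in characteristic $p$ gives
\[L(x) = L_0(x)^{q^k} = \sum_j a_j^{q^k}\, x^{q^{j+k}},\]
which is again linearized over $\F_{q^n}$ since each $a_j^{q^k} \in \F_{q^n}$.

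For the induction on $\dim V$, the base case $d=0$ gives $L_0(x) = x$, which is trivially linearized. For the inductive step, choose a $(d-1)$-dimensional subspace $W\subset V$ and an $\alpha\in V\setminus W$, so that $V$ is the disjoint union of the cosets $W + c\alpha$ for $c\in\F_q$. Then
\[L_0(x) \;=\; \prod_{c\in\F_q}\prod_{w\in W}(x-w-c\alpha) \;=\; \prod_{c\in\F_q} L_W(x-c\alpha), \qquad L_W(x) := \prod_{w\in W}(x-w).\]
By the inductive hypothesis $L_W$ is linearized with coefficients in $\F_{q^n}$; since $c\in\F_q$ satisfies $c^{q^j}=c$ for every $j$, expanding $L_W(x-c\alpha)$ using the $q^j$-power Frobenius yields the polynomial identity $L_W(x-c\alpha)=L_W(x)-c\,L_W(\alpha)$. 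Set $\beta := L_W(\alpha)$, which is nonzero because $W$ is precisely the zero set of $L_W$ and $\alpha\notin W$. Applying the standard identity $\prod_{c\in\F_q}(z-c)=z^q-z$ (after substituting $z = L_W(x)/\beta$ and clearing) gives
\[L_0(x) \;=\; L_W(x)^q - \beta^{q-1}\, L_W(x).\]
Both summands are linearized polynomials with coefficients in $\F_{q^n}$ (the $q$-th power by the same Frobenius argument used in the reduction step), so $L_0$ is linearized, completing the induction.

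The main obstacle will be ensuring that every step of the inductive argument is a genuine polynomial identity in $\F_{q^n}[x]$, not merely a pointwise identity on $\F_{q^n}$. In particular, the crucial expansion $L_W(x-c\alpha)=L_W(x)-cL_W(\alpha)$ relies on both the linearized form of $L_W$ and the fact that $c\in\F_q$ is fixed by the $q^j$-th power map. The remainder of the argument is essentially algebraic bookkeeping with $\prod_{c\in\F_q}(z-c)=z^q-z$ and the fact that raising a linearized polynomial to the $q$-th power (or $q^k$-th power) preserves the linearized form with coefficients remaining in $\F_{q^n}$.
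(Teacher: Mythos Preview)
The paper does not supply its own proof of this statement; it is quoted verbatim as \cite[Theorem 3.52]{Lidl1} and used as a black box. Consequently there is nothing in the paper to compare your argument against.

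That said, your proof is correct and is essentially the classical argument one finds in Lidl--Niederreiter. The reduction $L(x)=L_0(x)^{q^k}$ via the Frobenius endomorphism is valid as a formal polynomial identity in characteristic $p$, and the inductive step is sound: the key identities $L_W(x-c\alpha)=L_W(x)-cL_W(\alpha)$ for $c\in\F_q$ and $\prod_{c\in\F_q}(z-c\beta)=z^q-\beta^{q-1}z$ both hold in $\F_{q^n}[x]$, not merely as functions, so the concern you flag in your final paragraph is handled. The only point worth tightening is the observation that $\beta=L_W(\alpha)\neq 0$: this follows because $L_W$ has degree $q^{d-1}=|W|$ and vanishes on all of $W$, hence $W$ is exactly its zero set with each root simple, and $\alpha\notin W$.
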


Taken $[k]=1$ in the previous theorem, then we have the definition of subspaces polynomials. This was  presented by B. Sasson et al. in \cite{Sasson}.

\begin{definition}\label{definicion 2}
A monic linearized  polynomial $L$ over the field $\F_{q^n}$ is  called a \texttt{subspace polynomial}, with respect to $\F_{q^n}$, if and only if $L$ has the form
\[L(x) = \prod_{v\in V} (x-v),\]
for some subspace $V$ in $G_q(n,k)$.
\end{definition}

\begin{remark}\label{definicion 2.1}
It is clear from de definition that the following statements are equivalent:
\begin{enumerate}
	\item[($a$)] $L$ is a subspace polynomial, with respect to $\F_{q^n}$.
	\item[($b$)] $L$ splits completely over $\F_{q^n}$ and all its roots are simple, i.e. they have multiplicity 1.
	\item[($c$)] $L$ divides $x^{[n]}-x$.
\end{enumerate}
\end{remark}

As a consequence of the definition of subspace polynomial and the theorem \ref{3.50} follows the next affirmation.

\begin{lemma}\label{lema 1}
	Let $L$ be a subspace polynomial. Then the coefficient of $x$ is non-zero. Conversely, every linearized polynomial with non-zero coefficient of $x$ is a subspace polynomial in its splitting field.
\end{lemma}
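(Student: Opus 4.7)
The plan is to extract everything from a single observation about the formal derivative of a linearized polynomial, and then splice it together with Theorem \ref{3.50}.

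First I would record the key identity: for any linearized polynomial $L(x)=\sum_{j=0}^{k}a_j x^{[j]}$ over $\F_{q^n}$, the formal derivative is
\[
L'(x) = \sum_{j=0}^{k} a_j\, q^j\, x^{q^j-1} = a_0,
\]
since $\F_{q^n}$ has characteristic $p$ and $q^j\equiv 0\pmod p$ for every $j\ge 1$. So $L'$ is the constant polynomial $a_0$, and in particular $L$ has a multiple root in its splitting field if and only if $a_0=0$.

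For the forward direction, suppose $L$ is a subspace polynomial, so by Definition \ref{definicion 2} we have $L(x)=\prod_{v\in V}(x-v)$ for some $V\in G_q(n,k)$. The factors are pairwise distinct, so every root of $L$ is simple, which means $L'$ cannot vanish identically. By the identity above, $L'=a_0$, forcing $a_0\neq 0$.

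For the converse, let $L$ be a linearized polynomial with $a_0\neq 0$, and let $\F_{q^r}$ be its splitting field. Theorem \ref{3.50} tells me that the roots of $L$ form an $\F_q$-subspace $V\subseteq \F_{q^r}$, and that each root has a common multiplicity $e$ equal either to $1$ or to a power of $q$. But $L'=a_0\neq 0$, so no root of $L$ is a root of $L'$, which rules out $e\ge q$ and leaves $e=1$. Hence $L(x)=a_k\prod_{v\in V}(x-v)$ where $a_k$ is the leading coefficient, and (after dividing by $a_k$ to make it monic, which preserves the linearized form) $L$ is a subspace polynomial with respect to $\F_{q^r}$ in the sense of Definition \ref{definicion 2}.

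The only delicate point is justifying the derivative computation, and in particular noticing that once $L'$ is a nonzero constant, Theorem \ref{3.50} automatically collapses the multiplicity to $1$; the rest is bookkeeping. I do not anticipate any serious obstacle beyond being careful about the monic normalization in the statement ``is a subspace polynomial in its splitting field''.
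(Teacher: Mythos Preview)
Your proof is correct. The paper itself gives no detailed argument---it simply asserts that the lemma follows from Definition \ref{definicion 2} and Theorem \ref{3.50}---and your formal-derivative identity $L'(x)=a_0$ together with the appeal to Theorem \ref{3.50} is precisely the natural way to make that one-line claim rigorous.
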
  

\begin{remark}\label{lema 2}
	From definition \ref{definicion 2} follows that for a given vector space $V$ of dimension $k$ over $\F_q$ the polynomial $L(x)=\prod_{v\in V} (x-v)$ is the unique subspace polynomial whose roots are the set $V$. It was proved en \cite{Etzion1} that two subspaces are equal if and only if their related subspaces polynomials are equal. Then we can take the notation $L_V$ to refer the subspace polynomial  associated with the subspaces $V$.
\end{remark}

\begin{example}
	Let $\gamma$ be a primitive root of $x^8+x^4+x^3+x^2+1$ and use this polynomial to generate the field $\F_{2^8}$. The followings polynomials over $\F_{2^8}$ are linearized and also subspace polynomials.	
	\begin{enumerate}
		\item $L(x)= x^{[3]} + \gamma^{103}x^{[2]} + \gamma^{74}x$
		\item $L(x) = x^{[4]} + \gamma^{238}x^{[2]} + \gamma^{51}x$
		\item $L(x)= x^{[4]} + \gamma^{251}x^{[3]} + \gamma^{8}x^{[2]} +  \gamma^{182}x^{[1]} + \gamma^{207}x$.
	\end{enumerate}
\end{example}

\begin{lemma}\label{lema 4.2} 
Let $\alpha\in \F_{q^n}^\ast$ and $m$ be a natural number, with $m\mid q^n-1$.
Let $V\in G_q(n,k)$ and $U\in G_q(n,l)$ are two distinct subspaces and 
	\[L_V(x) = x^{[k]} + \sum_{j=0}^t a_j x^{[j]},\]
	\[L_U(x) = x^{[l]} + \sum_{j=0}^s b_j x^{[j]},\] 
with $a_t\neq 0$ and $b_s\neq 0$. Then 
\begin{enumerate}
 \item[$(a)$] If $k\leq l$, then $\dim(\alpha^mV\cap \alpha^mU) \leq \max(s,t+l-k)$.
 \item[$(b)$] If $k=l$, then
 \[\dim(\alpha^mV\cap \alpha^mU) \leq k-\min(k-t,k-s)\]
 and
 \[d(\alpha^mV,\alpha^mU) \geq 2\min(k-t,k-s).\]
\end{enumerate}
\end{lemma}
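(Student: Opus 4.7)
The plan is to reduce both bounds to one observation: if $W := \alpha^m V \cap \alpha^m U$ has dimension $d$, then any nonzero linearized polynomial vanishing on $W$ must have $q$-degree at least $d$, since by Theorem \ref{3.50} its root set is a linear subspace containing $W$. I would therefore try to build, from $L_{\alpha^m V}$ and $L_{\alpha^m U}$, a linearized polynomial of controlled $q$-degree vanishing on $W$.

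The key preliminary is that scaling by $\alpha^m$ preserves the \emph{support} of the coefficient sequence of a subspace polynomial: a direct substitution gives
\[ L_{\alpha^m V}(x) = x^{[k]} + \sum_{j=0}^t a_j \alpha^{m(q^k-q^j)} x^{[j]}, \]
and similarly for $L_{\alpha^m U}$, so the ``gaps'' above the highest subleading nonzero coefficients (at positions $t$ and $s$ respectively) survive. For part (b), where $k=l$, I would take $R(x) := L_{\alpha^m V}(x) - L_{\alpha^m U}(x)$. The leading $x^{[k]}$ terms cancel, so $R$ is a linearized polynomial of $q$-degree at most $\max(t,s)$; it is nonzero, since $V \neq U$ forces $L_{\alpha^m V} \neq L_{\alpha^m U}$ by Remark \ref{lema 2}; and every $w \in W$ satisfies $L_{\alpha^m V}(w)=L_{\alpha^m U}(w)=0$, hence $R(w)=0$. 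This yields $d \le \max(t,s) = k-\min(k-t,k-s)$, from which $d(\alpha^m V,\alpha^m U) = 2k-2d \ge 2\min(k-t,k-s)$ is immediate.

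For part (a), where $k \le l$, I would first raise $L_{\alpha^m V}$ to the $q^{l-k}$-th power; in characteristic $q$ this acts coefficient-wise and shifts each $[j]$-index up by $l-k$, producing a linearized polynomial of $q$-degree $l$ whose highest subleading nonzero term sits at position $t+l-k$. Setting $R(x) := L_{\alpha^m V}(x)^{[l-k]} - L_{\alpha^m U}(x)$, the leading $x^{[l]}$ terms again cancel, leaving a linearized polynomial of $q$-degree at most $\max(s,t+l-k)$; every $w \in W$ is a root of $L_{\alpha^m V}$, hence of $L_{\alpha^m V}^{[l-k]}$, and also of $L_{\alpha^m U}$, so $R(w)=0$. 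A second application of Theorem \ref{3.50} then gives the desired bound. The only delicate step is checking $R \neq 0$: when $k<l$, the roots of $L_{\alpha^m V}^{[l-k]}$ all carry multiplicity $q^{l-k} > 1$ while those of $L_{\alpha^m U}$ are simple, so the two polynomials cannot coincide; when $k=l$ this collapses to the situation of (b). Everything else is routine bookkeeping of the $q$-degrees and supports.
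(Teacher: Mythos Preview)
Your argument is correct, but it takes a different route from the paper's. The paper first observes that multiplication by $\alpha^m$ is an $\F_q$-linear automorphism of $\F_{q^n}$, so $\alpha^m V \cap \alpha^m U = \alpha^m(V\cap U)$ and hence $\dim(\alpha^m V \cap \alpha^m U) = \dim(V\cap U)$; it then simply cites \cite[Lemma~4]{Etzion1} for the bound on $\dim(V\cap U)$, and part~(b) follows from~(a) by arithmetic. You instead work directly with the shifted subspaces: you compute $L_{\alpha^m V}$ and $L_{\alpha^m U}$, note that the scaling preserves the coefficient supports, and then build a linearized polynomial $R$ of controlled $q$-degree vanishing on the intersection --- essentially re-deriving the content of \cite[Lemma~4]{Etzion1} in the shifted setting rather than reducing to it. Your approach is self-contained and makes the mechanism of the bound explicit; the paper's is shorter and makes transparent that the $\alpha^m$-shift plays no role in the inequality. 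One minor remark: ``in characteristic $q$'' should read ``in characteristic $p$'' (where $q=p^e$), though the Frobenius endomorphism $x\mapsto x^{q^{l-k}}$ is of course still additive, so the computation is unaffected.
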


\begin{proof}
\begin{enumerate}
\item[$(a)$] Using  \cite[Lemma 4]{Etzion1} we have that $\dim(U\cap V) \leq \max(s,t+l-k)$. Therefore, it is sufficient to show that $\dim(U\cap V)= \dim(\alpha^mV\cap \alpha^mU)$, which is immediately. To see that just consider the characteristic vectors of the subspaces $U$, $V$ $\alpha^mU$ and $\alpha^mU$.
	
\item[$(b)$] Suppose $k=l$. From $(a)$ we have that $\dim(\alpha^mV\cap \alpha^mU) \leq \max(s,t)$. On the other hand
\[k-\min(k-t,k-s) = k-(k-\max(t,s)) = \max(t,s).\]
Finally,
\begin{align*}
d(\alpha^mV,\alpha^mU) &= \dim(\alpha^mV)+\dim(\alpha^mU)-2\dim(\alpha^mV\cap \alpha^mU)\\
						&= 2k-2\dim(\alpha^mV\cap \alpha^mU)\\
						&= 2(k-\dim(\alpha^mV\cap \alpha^mU))\\
						&\geq 2\min(k-t,k-s).
\end{align*}
\end{enumerate}
\end{proof}

By an automorphism $\sigma$ of $\F_{q^n}$ over $\F_q$ we mean a automorphism of $ \F_{q^n}$ that fixes the elements of $\F_q$. That is, $\sigma$ is a bijective function on $\F_{q^n}$,  $\sigma(x+y) = \sigma(x) + \sigma(y)$, $\sigma(xy) = \sigma(x) \sigma(y)$, for all $x, y\in \F_{q^n}$ and $\sigma(x) = x$, for all $x\in \F_q$ (see \cite{Lidl1}, page 49). These functions are also called \texttt{Frobenius mappings}.

\begin{theorem}\cite[Theorem 2.21]{Lidl1}
	The distinct automorphisms of $\F_{q^n}$ over $\F_q$ are exactly the mappings $\sigma_j : \F_{q^n}\longrightarrow \F_{q^n}$, defined by 
	\[\sigma_j(x) = x^{[j]},\] 
	where $x\in \F_{q^n}$ and $j\in \{0,1,\ldots, n-1\}$.
\end{theorem}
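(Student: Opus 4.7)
The plan is to proceed in three steps: first check that each proposed map is an automorphism, then verify they are mutually distinct, then prove that nothing else can occur.

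For the first step, I would fix $j\in\{0,1,\ldots,n-1\}$ and verify directly that $\sigma_j(x)=x^{[j]}$ respects addition and multiplication. Multiplicativity is immediate from the laws of exponents. Additivity follows from the characteristic-$p$ binomial identity $(x+y)^q=x^q+y^q$ applied $j$ times, which gives $(x+y)^{[j]}=x^{[j]}+y^{[j]}$; this is the same calculation that underlies the notion of a linearized polynomial in the previous section. The map fixes $\F_q$ because every $a\in\F_q$ satisfies $a^q=a$, hence $a^{[j]}=a$ inductively. Finally $\sigma_j$ is a nonzero ring homomorphism from the field $\F_{q^n}$ to itself, so it is injective, and injective self-maps of a finite set are bijective.

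For the second step, I would argue by contradiction. Suppose $\sigma_i=\sigma_j$ with $0\le i<j\le n-1$. Then every $x\in\F_{q^n}$ is a root of the polynomial $x^{[j]}-x^{[i]}\in\F_{q^n}[x]$, which has degree $q^j\le q^{n-1}<q^n$. This produces $q^n$ roots of a nonzero polynomial of degree strictly less than $q^n$, a contradiction. Hence $\sigma_0,\sigma_1,\ldots,\sigma_{n-1}$ are $n$ pairwise distinct automorphisms of $\F_{q^n}$ fixing $\F_q$.

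For the third step, which I expect to be the main conceptual point, I would exhibit a primitive element $\gamma$ of $\F_{q^n}$ and let $f\in\F_q[x]$ be its minimal polynomial, so $\deg f=n$. Any automorphism $\sigma$ of $\F_{q^n}$ fixing $\F_q$ sends $\gamma$ to another root of $f$, and is determined by this image since $\gamma$ generates $\F_{q^n}$ over $\F_q$. Therefore the number of such automorphisms is at most the number of roots of $f$ in $\F_{q^n}$, which is at most $n$. Combined with step two, this forces equality: the $\sigma_j$ enumerate all of them. (Alternatively, one can observe directly that $\gamma,\gamma^{[1]},\ldots,\gamma^{[n-1]}$ are $n$ distinct roots of $f$, by the same degree argument as in step two, so each $\sigma_j$ arises as a valid choice and the list is exhaustive.) This completes the identification of the automorphism group with $\{\sigma_0,\ldots,\sigma_{n-1}\}$.
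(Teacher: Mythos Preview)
Your three-step argument is correct and is essentially the standard proof one finds in Lidl--Niederreiter; each step is sound, including the degree-counting contradiction in step two and the minimal-polynomial bound in step three. Note, however, that the paper itself does not prove this statement at all: it is quoted verbatim as \cite[Theorem 2.21]{Lidl1} and used as a black box, so there is no in-paper proof to compare against.
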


\begin{definition}
	Let $V\in G_q(n,k)$ and $j\in \{0,1,\ldots, n-1\}$. The $i$-th \texttt{Frobenius shift} of $V$ is defined as the image of $V$ under $\sigma_j$. That is,
	\[\sigma_j(V):= \{\sigma_j(v)\mid v\in V\}.\] 
\end{definition}

Since $\sigma_j$ is an automorphism, we have that $\sigma_j(V)\in G_q(n,k)$. 
Now we investigate the relationship between $m$-quasi cyclic subspaces codes and Frobenius mappings. Next Lemma shows how is the subspace polynomial of the subspace resulting by applying the $i$-th Frobenius mapping. It appears as \cite[Lemma 6]{Etzion1}. We include the proof for the sake of completeness.

\begin{lemma}\label{lema 6} 
If $V\in G_q(n,k)$ and $L_V(x)= x^{[k]} + \sum_{j=0}^i a_jx^{[j]}$, then for all 
	$s\in \{0,1,\ldots, n-1\}$ holds
	\[L_{\sigma_s(V)}(x)= x^{[k]} + \sum_{j=0}^i \sigma_s(a_j)x^{[j]}.\]
\end{lemma}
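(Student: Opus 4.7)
The plan is to exploit the fact that $\sigma_s$ is itself a power of the Frobenius, and therefore commutes with every map of the form $x\mapsto x^{[j]}$. Concretely, define the candidate polynomial
\[\widetilde{L}(x) := x^{[k]} + \sum_{j=0}^{i} \sigma_s(a_j)\, x^{[j]}\]
and show directly that $\widetilde{L} = L_{\sigma_s(V)}$ by proving (i) $\widetilde{L}$ vanishes on every element of $\sigma_s(V)$, and (ii) degree and monicity force equality with the unique subspace polynomial of $\sigma_s(V)$ (using the uniqueness recorded in Remark \ref{lema 2}).

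For step (i), I would pick $v\in V$ arbitrary and evaluate $\widetilde{L}$ at $\sigma_s(v)$. Since $\sigma_s(v)=v^{[s]}$, we get
\[\bigl(\sigma_s(v)\bigr)^{[j]} = v^{[s+j]} = \bigl(v^{[j]}\bigr)^{[s]} = \sigma_s\!\left(v^{[j]}\right),\]
so using that $\sigma_s$ is a ring homomorphism,
\[\widetilde{L}(\sigma_s(v)) = \sigma_s(v^{[k]}) + \sum_{j=0}^{i} \sigma_s(a_j)\sigma_s(v^{[j]}) = \sigma_s\!\left(v^{[k]} + \sum_{j=0}^{i} a_j v^{[j]}\right) = \sigma_s(L_V(v)) = \sigma_s(0) = 0.\]
Hence every element of $\sigma_s(V)$ is a root of $\widetilde{L}$.

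For step (ii), observe that $\widetilde{L}$ is monic of degree $q^k=[k]$, while $|\sigma_s(V)| = q^k$ because $\sigma_s$ is a bijection. So $\widetilde{L}$ has exactly $q^k$ distinct roots (the elements of $\sigma_s(V)$) and factors as $\prod_{w\in \sigma_s(V)}(x-w)$. By Definition \ref{definicion 2} and the uniqueness noted in Remark \ref{lema 2}, this product is precisely $L_{\sigma_s(V)}$, giving the claimed identity.

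No real obstacle is expected: the only subtlety is the commutation identity $(\sigma_s(v))^{[j]} = \sigma_s(v^{[j]})$, which is immediate since both $\sigma_s$ and the $[j]$-th power map are iterates of the same Frobenius $x\mapsto x^q$. Everything else is degree counting and invoking uniqueness of the subspace polynomial.
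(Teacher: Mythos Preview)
Your proposal is correct and follows essentially the same route as the paper: define the candidate polynomial, verify via the Frobenius homomorphism property that it vanishes on $\sigma_s(V)$, and then conclude by degree/monicity and the uniqueness of the subspace polynomial (Remark~\ref{lema 2}). You are slightly more explicit about the commutation $(\sigma_s(v))^{[j]}=\sigma_s(v^{[j]})$ and the bijectivity of $\sigma_s$, but the argument is the same.
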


\begin{proof} 
Let $s\in \{0,1,\ldots, n-1\}$ fixed and $y\in \sigma_s(V)$. Then there exist $v\in V$ such that $y=\sigma_s(v)$. Since $\sigma_j$ is an automorphism we have
	
	\begin{align*}
	y^{[k]} + \sum_{j=0}^i \sigma_s(a_j)y^{[j]} & = \sigma_s(v)^{[k]} + \sum_{j=0}^i \sigma_s(a_j) \sigma_s(v)^{[j]}\\ 
	&= \sigma_s(v^{[k]}) + \sum_{j=0}^i \sigma_s(a_jv^{[j]}) \\
	&= \sigma_s\bigg(v^{[k]} + \sum_{j=0}^i a_jv^{[j]}\bigg) \\
	&= \sigma_s\big(L_V(v)\bigg) \\
	&= \sigma_s\bigg(\prod_{w\in V} (v-w)\bigg) \\
	&= \sigma_s(0) \\
	&=0.
	\end{align*}
This proves that every element of $\sigma_s(V)$ is a root of the polynomial $x^{[k]} + \sum_{j=0}^i \sigma_s(a_j) x^{[j]}$. Due to the degree of this polynomial, it must be the subspace polynomial $L_{\sigma_s(V)}(x)$.
\end{proof}

Let $m\in \N$, with $m\mid q^n-1$. Now let us denote with $(\F_{q^n})^m$ the set of all $m$-powers of elements of $\F_{q^n}$. That is, 
\[(\F_{q^n})^m =\{\alpha^m\mid \alpha\in \F_{q^n}\}.\]
This set has $\frac{1}{m}\big(q^n-1\big)$ elements.

For $\alpha, \beta\in \F_{q^n}^\ast$, and an integer number $t$, with $t$ divides $n$, we define on $(\F_{q^n})^m$ the following relation $\sim_t$:
\begin{equation}\label{equivalencia1}
\alpha^m \sim_t \beta^m :\Longleftrightarrow \frac{\alpha^m}{\beta^m}\in \F_{q^t}^\ast.
\end{equation}

It is immediate that $\sim_t$ define an equivalence relation on $(\F_{q^n})^m$ and that the equivalence class $[\alpha^m]$ of $\alpha^m$ under this relation is the $m$-quasi cyclic shift $\alpha^m \F_{q^t}^\ast$ of $\F_{q^t}^\ast$.  Therefore, there are exactly $\frac{1}{m}\big(\frac{q^n-1}{q^t-1}\big)$ equivalence classes of $\sim_t$, each of which has $q^t-1$ elements.

Next Lemma shows how is the subspace polynomial of the subspace resulting by applying a $m$-quasi cyclic shift.

\begin{lemma}\label{lema 5}
Let $\alpha\in \F_{q^n}^\ast$ and $m$ be a natural number, with $m\mid q^n-1$ and 
	$V\in G_q(n,k)$. Then 
	\[L_{\alpha^m V}(x) = \alpha^{m[k]} L_V(\alpha^{-m}x).\]
Furthermore, if $L_V(x)= x^{[k]} + \sum_{j=0}^i a_jx^{[j]}$, then 
	\[L_{\alpha^m V}(x) = x^{[k]} + \sum_{j=0}^i \alpha^{m([k]-[j])}a_jx^{[j]}.\]
\end{lemma}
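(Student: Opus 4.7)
The plan is to exploit the explicit product formula for a subspace polynomial from Definition \ref{definicion 2} and simply perform a change of variable. Write
\[L_{\alpha^m V}(x) = \prod_{w\in \alpha^m V}(x-w) = \prod_{v\in V}(x-\alpha^m v),\]
since multiplication by $\alpha^m$ is a bijection $V\to \alpha^m V$. The first step is to pull the factor $\alpha^m$ out of each linear factor by writing $x-\alpha^m v = \alpha^m(\alpha^{-m}x - v)$; since $V$ has exactly $|V|=q^k=[k]$ elements, this contributes the constant $\alpha^{m[k]}$, yielding
\[L_{\alpha^m V}(x) = \alpha^{m[k]}\prod_{v\in V}(\alpha^{-m}x-v) = \alpha^{m[k]}\, L_V(\alpha^{-m}x),\]
which is the first displayed equation.

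For the second equation, I would substitute $\alpha^{-m}x$ into the given expansion of $L_V$ and use that raising to the power $[j]=q^j$ is additive over products (hence $(\alpha^{-m}x)^{[j]} = \alpha^{-m[j]}x^{[j]}$) to obtain
\[L_V(\alpha^{-m}x) = \alpha^{-m[k]}x^{[k]} + \sum_{j=0}^{i} a_j\,\alpha^{-m[j]}x^{[j]}.\]
Multiplying through by $\alpha^{m[k]}$ collapses the leading coefficient to $1$ and converts each interior coefficient to $a_j\,\alpha^{m([k]-[j])}$, giving the stated formula.

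Since both identities reduce to a change of variable inside a product of linear factors, there is no real obstacle: the only tiny point to keep straight is that the number of linear factors pulled out equals $|V|=[k]$ (not $k$), which is precisely what produces the exponent $m[k]$ in the prefactor. No appeal to Theorem \ref{3.50} or Lemma \ref{lema 1} is needed; the argument uses only the definition of $L_V$ and the fact that the Frobenius-like exponent $[j]$ distributes over the product $\alpha^{-m}\cdot x$.
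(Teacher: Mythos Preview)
Your argument is correct and matches the paper's proof essentially line for line: both factor out $\alpha^m$ from each of the $|V|=[k]$ linear factors in the product definition of $L_{\alpha^m V}$, and then substitute into the expanded form of $L_V$. The only cosmetic slip is the phrase ``additive over products'' for the map $y\mapsto y^{[j]}$; you mean it is multiplicative, which is all that is needed.
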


\begin{proof}
	Using the definition we have,
	\begin{align*}
	L_{\alpha^m V}(x) &= \prod_{u\in \alpha^m V} (x-u)\\
	&= \prod_{v\in V} (x-\alpha^m v)\\
	&= \prod_{v\in V} \alpha^m(\alpha^{-m}x - v)\\
	&= \alpha^{m[k]}\prod_{v\in V} (\alpha^{-m}x - v)\\
	&= \alpha^{m[k]} P_V(\alpha^{-m}x).
	\end{align*}
	If $L_V(x)= x^{[k]} + \sum_{j=0}^i a_jx^{[j]}$, then 
	\begin{align*}
	L_{\alpha^m V}(x) &= \alpha^{m[k]} \bigg( (\alpha^{-m}x)^{[k]} + \sum_{j=0}^i a_j (\alpha^{-m}x)^{[j]} \bigg)\\ 
	&= \alpha^{m[k]}\alpha^{-m[k]} x^{[k]} + \sum_{j=0}^i a_j \alpha^{m[k]} \alpha^{-m[j]} x^{[j]}\\
	&= x^{[k]} + \sum_{j=0}^i \alpha^{m([k]-[j])}a_jx^{[j]}.
	\end{align*}
\end{proof}

In the following result we present a connection between the coefficients of the polynomial $P_{\alpha^m V}(x)$ of a given subspace $V\in G_q(n,k)$ and the number of its different $m$-quasi cyclic shifts. 

\begin{lemma}\label{lema 7} 
Let $V\in G_q(n,k)$ and $L_V(x)= x^{[k]} + \sum_{j=0}^i a_jx^{[j]}$. If $a_s\neq 0$, for some $s\in \{1,\ldots, i\}$ and $t:=\gcd(s,n)$, then
\[\left|\{\alpha^m V\mid \alpha\in \F_{q^n}^\ast\}\right| \geq \frac{1}{m}\left(\frac{q^n-1}{q^t-1}\right).\]
\end{lemma}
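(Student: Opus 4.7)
The plan is to realize the set of distinct $m$-quasi cyclic shifts of $V$ as the orbit of $V$ under the action of the subgroup of $m$-th powers in $\F_{q^n}^\ast$, which has order $(q^n - 1)/m$, and then apply the orbit--stabilizer principle. It suffices to bound the stabilizer of $V$ in this subgroup by $q^t - 1$; since this stabilizer is contained in the full cyclic stabilizer $H := \{\beta \in \F_{q^n}^\ast : \beta V = V\}$, it is enough to prove $|H| \leq q^t - 1$.

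To identify $H$, I would apply Lemma~\ref{lema 5} with $m = 1$ together with the uniqueness of subspace polynomials (Remark~\ref{lema 2}): the equality $\beta V = V$ holds if and only if $\beta^{q^k - q^j} = 1$ for every index $j$ with $a_j \neq 0$. Two such indices are always available: $j = 0$, since $a_0 \neq 0$ by Lemma~\ref{lema 1}, and $j = s$, by hypothesis. These produce the two constraints $\beta^{q^k - 1} = 1$ and $\beta^{q^k - q^s} = 1$ inside the cyclic group $\F_{q^n}^\ast$.

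The key step is to combine these two constraints. Subtracting the exponents gives $\beta^{(q^k - 1) - (q^k - q^s)} = \beta^{q^s - 1} = 1$. Intersecting with the always-valid $\beta^{q^n - 1} = 1$ and using the standard identity $\gcd(q^s - 1, q^n - 1) = q^{\gcd(s, n)} - 1 = q^t - 1$, the order of $\beta$ divides $q^t - 1$. Hence $|H| \leq q^t - 1$, and the orbit--stabilizer principle now yields the claimed lower bound $\frac{1}{m}\bigl((q^n - 1)/(q^t - 1)\bigr)$.

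The main obstacle is recognizing that a single nontrivial constraint coming from $j = s$ is not enough: $\beta^{q^k - q^s} = 1$ alone only forces the order of $\beta$ to divide $q^{\gcd(k-s,\, n)} - 1$, which can strictly exceed $q^t - 1$ (take for instance $n = 12$, $k = 6$, $s = 2$). The constraint from $j = 0$, guaranteed by Lemma~\ref{lema 1}, is precisely what eliminates the $q^k$ factor and isolates the exponent $q^s - 1$ that matches the statement. After that, the transition from the cyclic orbit in $\F_{q^n}^\ast$ to the $m$-quasi orbit under $m$-th powers is just the index computation accounting for the factor $1/m$.
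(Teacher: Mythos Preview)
Your proof is correct and follows essentially the same route as the paper: both arguments use Lemma~\ref{lema 5} and the uniqueness in Remark~\ref{lema 2} to extract the constraints $\beta^{q^k-1}=1$ and $\beta^{q^k-q^s}=1$ (the first relying on $a_0\neq 0$ from Lemma~\ref{lema 1}), combine them to obtain $\beta^{q^s-1}=1$, and then invoke $\gcd(q^n-1,q^s-1)=q^t-1$. The only cosmetic difference is that you phrase the final count via the orbit--stabilizer theorem for the group of $m$-th powers, whereas the paper phrases it via the equivalence relation $\sim_t$ on $(\F_{q^n})^m$; these are two names for the same bookkeeping.
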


\begin{proof}
Assume that $\alpha^m V = \beta^m V$, where $\alpha, \beta\in \F_{q^n}^\ast$. By Lemma \ref{lema 5} we have 
\[L_{\alpha^m V}(x) = x^{[k]} + \sum_{j=0}^i \alpha^{m([k]-[j])}a_jx^{[j]}\]
and	
\[L_{\beta^m V}(x) = x^{[k]} + \sum_{j=0}^i \beta^{m([k]-[j])}a_jx^{[j]}.\]
Due to $\alpha^m V = \beta^m V$ and the uniqueness of the subspace  polynomial (Remark \ref{lema 2}) follows
\[\sum_{j=0}^i \alpha^{m([k]-[j])}a_jx^{[j]} = \sum_{j=0}^i \beta^{m([k]-[j])}a_jx^{[j]}.\]
Then
\begin{align*}
a_s \alpha^{m([k]-[s])} &= a_s\beta^{m([k]-[s])}\\
a_0 \alpha^{m([k]-1)} &= a_0\beta^{m([k]-1)}.
\end{align*}
If $a_s\neq 0$ and since $a_0\neq 0$ (due to Lemma \ref{lema 1}) follows that
	\begin{align}
	\bigg(\frac{\alpha^m}{\beta^m}\bigg)^{[k]-[s]} & = 1 \label{A1}\\
	\bigg(\frac{\alpha^m}{\beta^m}\bigg)^{[k]-1} &= 1. \label{A2}
	\end{align}
By dividing \eqref{A2} by \eqref{A1} follows that
	\[\bigg(\frac{\alpha^m}{\beta^m}\bigg)^{[s]-1} = 1.\]
Then
\[\ord\bigg(\frac{\alpha^m}{\beta^m}\bigg) \ \bigg|\ \gcd(q^n-1,q^s-1)\]
Using \cite{Knuth} (Exercise 38 on page 147)  we have
	\[\gcd(q^n-1,q^s-1) = q^{\gcd(n,s)-1}.\]
Therefore
\[\ord\bigg(\frac{\alpha^m}{\beta^m}\bigg) \ \bigg|\ q^{\gcd(n,s)-1},\]
which implies $\frac{\alpha^m}{\beta^m}\in \F_{q^{\gcd(n,s)}}^\ast = \F_{q^t}^ \ast$, consequently it follows that $\alpha^m \sim_t \beta^m$. We know that there exist exactly  $\frac{1}{m}\big(\frac{q^n-1}{q^t-1}\big)$ equivalence classes of $\sim_t$, which implies the assertion. 
\end{proof}

\begin{remark}
Let	$\alpha, \beta\in \F_{q^n}^\ast$, $V\in G_q(n,k)$ and $L_V(x)= x^{[k]} + \sum_{j=0}^i a_jx^{[j]}$. If $a_s\neq 0$, for some $s\in \{1,\ldots, i\}$ and $t:= \gcd(s,n)=t$, then 
\begin{enumerate}
\item[$(a)$] If $\alpha^m V = \beta^m V$, then $\alpha^m \sim_t \beta^m$
\item[$(b)$] If $\alpha^m \not\sim_t \beta^m$, then $\alpha^m V\neq \beta^m V$.
\end{enumerate}
\end{remark}

In the construction of $m$-quasi cyclic subspace codes play an important role a special class of linearized polynomials. These are subspaces polynomials, which in turn are certain Trinomials.

\begin{lemma}\label{lema 8} 
Let $m$ be a natural number, with $m\mid q^n-1$, $V\in G_q(n,k)$ and $L_V(x)= x^{[k]} + a_sx^{[s]} + a_0x$, where $a_s\neq 0$. If there exists $\alpha\in \F_{q^n}^\ast$ and $i\in \{0,1,\ldots, n-1\}$ such that $\sigma_i(V)=\alpha^mV$, then 
\[\left(\frac{a_0^{\frac{q^k-q^s}{q^s-1}}}{a_1^{\frac{q^k-1}{q^s-1}}}\right)^{q^i-1} = 1.\]
\end{lemma}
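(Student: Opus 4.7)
The plan is to compare two expressions for the same subspace polynomial and then eliminate $\alpha^m$ from the resulting coefficient identities. Since $\sigma_i(V) = \alpha^m V$, by the uniqueness statement in Remark \ref{lema 2} the polynomials $L_{\sigma_i(V)}$ and $L_{\alpha^m V}$ must be identical, so the strategy is to write each of them explicitly using the lemmas already proved and then read off relations among $a_0$, $a_s$ and $\alpha^m$.

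First I would apply Lemma \ref{lema 6} to get
\[L_{\sigma_i(V)}(x) = x^{[k]} + a_s^{[i]} x^{[s]} + a_0^{[i]} x,\]
and apply Lemma \ref{lema 5} to get
\[L_{\alpha^m V}(x) = x^{[k]} + \alpha^{m([k]-[s])} a_s\, x^{[s]} + \alpha^{m([k]-1)} a_0\, x.\]
Equating coefficients of $x^{[s]}$ and $x$, and using that $a_s \neq 0$ by hypothesis and $a_0 \neq 0$ by Lemma \ref{lema 1}, I obtain the two identities
\[a_s^{\,q^i - 1} = \alpha^{m(q^k - q^s)}, \qquad a_0^{\,q^i - 1} = \alpha^{m(q^k - 1)}.\]

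The rest of the proof is the elimination of $\alpha^m$. Raising the first identity to the $(q^k - 1)$-th power and the second to the $(q^k - q^s)$-th power produces the same right-hand side, so
\[a_s^{(q^i - 1)(q^k - 1)} = a_0^{(q^i - 1)(q^k - q^s)},\]
which rearranges to $\bigl(a_0^{\,q^k - q^s} / a_s^{\,q^k - 1}\bigr)^{q^i - 1} = 1$. Dividing the exponents by $q^s - 1$ (where the stated conclusion uses $a_1$ for what is really $a_s$, the unique nontrivial intermediate coefficient in the trinomial) yields the claimed identity.

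The only genuine subtlety is the final exponent rearrangement and the fact that the division by $q^s - 1$ is meaningful at the level of the stated formula; once the two coefficient identities have been extracted from uniqueness, everything else is bookkeeping. I do not expect a real obstacle: the lemma is essentially a direct corollary of Lemmas \ref{lema 5} and \ref{lema 6} combined with the uniqueness of subspace polynomials.
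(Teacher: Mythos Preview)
Your setup is exactly the paper's: apply Lemmas \ref{lema 5} and \ref{lema 6}, invoke uniqueness (Remark \ref{lema 2}), and extract
\[
\alpha^{m(q^k-q^s)} = a_s^{\,q^i-1}, \qquad \alpha^{m(q^k-1)} = a_0^{\,q^i-1}.
\]
The gap is in your elimination of $\alpha^m$. By raising the first relation to the $(q^k-1)$-th power and the second to the $(q^k-q^s)$-th power you obtain
\[
\bigl(a_0^{\,q^k-q^s}\big/ a_s^{\,q^k-1}\bigr)^{q^i-1}=1,
\]
which, writing $z:=a_0^{(q^k-q^s)/(q^s-1)}\big/ a_s^{(q^k-1)/(q^s-1)}$, says only that $z^{(q^s-1)(q^i-1)}=1$. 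Your last sentence (``dividing the exponents by $q^s-1$'') asserts $z^{q^i-1}=1$, but passing from $z^{(q^s-1)(q^i-1)}=1$ to $z^{q^i-1}=1$ is not valid in $\F_{q^n}^\ast$ in general: for instance with $q=2$, $n=6$, $s=2$, $i=3$ one has $\gcd\bigl((q^s-1)(q^i-1),\,q^n-1\bigr)=21\neq 7=\gcd(q^i-1,\,q^n-1)$, so the two conditions on $z$ differ.

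The paper avoids this by a more careful elimination. Dividing the second relation by the first gives $\alpha^{m(q^s-1)}=(a_0/a_s)^{\,q^i-1}$. Then, since $q^k-q^s=\dfrac{q^k-q^s}{q^s-1}\,(q^s-1)$, raising this last identity to the integer power $\dfrac{q^k-q^s}{q^s-1}$ and substituting into $\alpha^{m(q^k-q^s)}=a_s^{\,q^i-1}$ yields directly
\[
(a_0/a_s)^{(q^i-1)\frac{q^k-q^s}{q^s-1}} = a_s^{\,q^i-1},
\]
hence $z^{q^i-1}=1$ without any spurious $(q^s-1)$ factor. In short, instead of over-eliminating (matching the large common exponent $m(q^k-q^s)(q^k-1)$), first isolate $\alpha^{m(q^s-1)}$ and then substitute; that is the missing idea.
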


\begin{proof}
Assume that $\sigma_i(V)=\alpha^mV$, for some $i\in \{0,1,\ldots, n-1\}$ and some $\alpha\in \F_{q^n}^\ast$. Using Lemmas \ref{lema 6} and \ref{lema 5} follows
\begin{align*}
L_{\alpha^m V}(x) &= x^{[k]} + a_s \alpha^{m([k]-[s])} x^{[s]} + a_0 \alpha^{m([k]-1)} x\\ 
	L_{\sigma_i(V)}(x) &= x^{[k]} + \sigma_i(a_s)xc{[s]} + \sigma_i(a_0) x.
	\end{align*}
	By Remark \ref{lema 2},
	\begin{align*}
	a_s \alpha^{m([k]-[s])} & = \sigma_i(a_s)  = a_s^{[i]}\\
	a_0 \alpha^{m([k]-1)} &= \sigma_i(a_0)  =a_0^{[i]}.
	\end{align*}
	Since $a_0\neq 0$ and $a_s\neq 0$, we have
	\begin{align}
	\alpha^{m([k]-[s])} & = a_s^{[i]-1} \label{A3}\\
	\alpha^{m([k]-1)} &= a_0^{[i]-1}. \label{A4}
	\end{align}
Therefore
	\[\alpha^{m([s]-1)} = \left(\frac{a_0}{a_s}\right)^{[i]-1}.\]
That is,
	\begin{equation}\label{A5}
	\alpha^{m(q^s-1)} = \left(\frac{a_0}{a_s}\right)^{q^i-1}.
	\end{equation}
From \eqref{A3} it follows 
	\[\alpha^{m\left(\frac{q^k-q^s}{q^s-1}\right)(q^s-1)} =  a_s^{q^i-1}.\]
Using \eqref{A5} we have
	\[\left(\frac{a_0}{a_s}\right)^{(q^i-1)\left(\frac{q^k-q^s}{q^s-1}\right)} =  a_s^{q^i-1}.\]
Thus
	\[\frac{a_0^{\left(\frac{q^k-q^s}{q^s-1}\right)(q^i-1)}}{a_s^{\left(\frac{q^k-q^s}{q^s-1}\right)(q^i-1)+(q^i-1)}} =1.\]
Finally
	\[\frac{a_0^{\left(\frac{q^k-q^s}{q^s-1}\right)(q^i-1)}}{a_s^{\left(\frac{q^k-1}{q^s-1}\right)(q^i-1)}} =1,\]
	which implies  
\[\left(\frac{a_0^{\frac{q^k-q^s}{q^s-1}}}{a_1^{\frac{q^k-1}{q^s-1}}}\right)^{q^i-1} = 1.\]
\end{proof}

\section{$m$-quasi cyclic subspaces codes}

\begin{definition}
Let $\alpha$ be a primitive element of $\F_{q^n}$ and $m$ a natural number with $m\mid (q^n-1)$. 
A subspace code $\codeC\subseteq \P_q(n)$ is called $m$-\texttt{quasi cyclic}, if holds the following property: 
\[\{0,\alpha^{i_1},\alpha^{i_2},\ldots, \alpha^{i_k}\}\in \codeC \Rightarrow \{0,\alpha^{i_1+m},\alpha^{i_2+m}, \ldots, \alpha^{i_k+m}\}\in \codeC.\]
\end{definition}

Now we present a natural generalization for the definition of orbit of a subspace and for the length of an orbit.

\begin{definition}
Let $\alpha\in \F_{q^n}^\ast$, $m$ a natural number with $m\mid q^n-1$ and $V\in G_q(n,k)$. 
The \texttt{$m$-quasi cyclic shift} or the \texttt{$m$-quasi orbit} of $V$ is defined by
\[\alpha^m V := \{\alpha^m v\mid v\in V\}.\] 
Then a code $\codeC\subseteq G_q(n,k)$ is called $m$-\texttt{quasi cyclic}, if for all $\alpha\in \F_{q^n}^\ast$ and all subspace 
$V\in \codeC$ we have $\alpha^m V\in \codeC$. That is, $\{\alpha^m V\mid \alpha\in \F_{q^n}^\ast\}\subseteq \codeC$, for all $V\in \codeC$.
\end{definition}

\subsection{$m$-quasi cyclic codes with a single quasi orbit}

The following Lemma, whose proof is inspired in the idea presented by T. Etzion at al. in \cite[Lemma 9]{Etzion1} for the case $m=1$. 
The demonstration is obtained only by performing basic modifications to the cited one.

\begin{lemma}\label{lema9}
If $m$ a natural number with $m\mid q^n-1$ and $V\in G_q(n,k)$, then \[|\{\alpha^m V\mid \alpha\in \F_{q^n}^\ast\}| = \frac{1}{m}\left(\frac{q^n-1}{q^t-1}\right),\] 
for some natural number $t$, which divides $n$.
\end{lemma}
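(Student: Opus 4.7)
The plan is to mimic the proof of equation (\ref{orbita}) (the case $m=1$), but with the action of $\F_{q^n}^\ast$ replaced by the action of its subgroup $H:=(\F_{q^n}^\ast)^m = \{\alpha^m : \alpha\in\F_{q^n}^\ast\}$ on $G_q(n,k)$. Since $m\mid q^n-1$ and $\F_{q^n}^\ast$ is cyclic of order $q^n-1$, $H$ is the unique subgroup of order $(q^n-1)/m$, and the $m$-quasi orbit of $V$ is exactly the $H$-orbit of $V$. Orbit--stabilizer then reduces the count to determining the size of $H_V := \{\beta\in H : \beta V = V\}$.

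First I would recall why (\ref{orbita}) holds: the set $\mathrm{Stab}_{\F_{q^n}^\ast}(V)\cup\{0\}$ is closed under addition, because $(\alpha+\beta)v = \alpha v+\beta v\in V$ for every $v\in V$ whenever $\alpha V = \beta V = V$, and it is obviously closed under multiplication, so it is a subring of $\F_{q^n}$ and hence a subfield $\F_{q^s}$ with $s\mid n$. Consequently $H_V = H\cap \F_{q^s}^\ast$. Because the $m$-th power map $\alpha\mapsto\alpha^m$ from $\F_{q^n}^\ast$ onto $H$ has kernel of size $m$, the full preimage $S:=\{\alpha\in\F_{q^n}^\ast : \alpha^m V = V\}$ satisfies $|S|=m\,|H_V|$, and two elements $\alpha,\beta$ yield the same shift $\alpha^m V=\beta^m V$ iff $\alpha\beta^{-1}\in S$. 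Partitioning $\F_{q^n}^\ast$ into cosets of $S$ therefore gives
\[
\left|\{\alpha^m V:\alpha\in\F_{q^n}^\ast\}\right|=\frac{q^n-1}{|S|}=\frac{q^n-1}{m\,|H_V|}.
\]

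To match the stated form $\tfrac{1}{m}(q^n-1)/(q^t-1)$ with $t\mid n$, it remains to show that $|H_V| = q^t-1$ for some divisor $t$ of $n$. I expect this to be the main obstacle. A priori $H_V$ is only a cyclic subgroup of $\F_{q^s}^\ast$ of order $\gcd((q^n-1)/m,\,q^s-1)$, and such a gcd is not automatically of the form $q^t-1$. My plan to handle this is to revisit the subfield-closure argument of (\ref{orbita}) but now inside the smaller field $\F_{q^s}$: one tries to show that $H_V\cup\{0\}$ inherits additive closure on top of its evident multiplicative closure, using that the image of the $m$-th power map meshes with the Frobenius structure of $\F_{q^s}$, which would promote $H_V\cup\{0\}$ to a subfield $\F_{q^t}\subseteq\F_{q^s}$ with $t\mid s\mid n$. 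Establishing this additive closure for the intersection — rather than just for the ambient stabilizer — is the delicate technical step, after which the orbit-size formula slots into the stated shape.
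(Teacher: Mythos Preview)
Your approach is essentially identical to the paper's: both work with a primitive element $\gamma$, identify the $m$-quasi orbit with the orbit of $V$ under the subgroup $H=\langle\gamma^m\rangle$, and reduce everything to showing that $H_V\cup\{0\}$ is a subfield of $\F_{q^n}$. The paper argues that if $\gamma^{iml},\gamma^{rml}\in A_0=H_V$ then $(\gamma^{iml}+\gamma^{rml})V\subseteq V$ and \emph{therefore} $\gamma^{iml}+\gamma^{rml}\in A_0$; you flag exactly this step as ``the delicate technical step'' and do not claim to have completed it. You are right to be suspicious: that implication only shows that the sum lies in the \emph{full} stabilizer $\F_{q^s}^\ast$ of $V$, not in $H_V=H\cap\F_{q^s}^\ast$, so additive closure of $H_V\cup\{0\}$ does not follow.

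In fact the lemma as stated is false, so this step cannot be salvaged by either argument. Take $q=2$, $n=12$, $m=9$ (which divides $2^{12}-1=4095$), and let $V=\F_{16}\in G_2(12,4)$. The full stabilizer of $V$ is $\F_{16}^\ast$, so $|H_V|=\gcd\bigl(4095/9,\,2^4-1\bigr)=\gcd(455,15)=5$, and the $m$-quasi orbit has size $455/5=91$. But $\tfrac{1}{9}\cdot\tfrac{4095}{2^t-1}=91$ would force $2^t-1=5$, which is impossible. Thus both your proposal and the paper's own proof break at the same place; the statement needs an extra hypothesis (for instance $m\mid (q^n-1)/(q^s-1)$, where $\F_{q^s}^\ast$ is the full stabilizer of $V$, which then forces $\F_{q^s}^\ast\subseteq H$, hence $H_V=\F_{q^s}^\ast$ and $t=s$).
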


\begin{proof}
Let $\gamma$ be a primitive element in $\F_{q^n}$, that is, $\F_{q^n}^\ast =\langle \gamma\rangle$ and let $l$ the smallest natural number with $\gamma^{lm}V=V$.
It is clear that $lm\mid q^n-1$. Let now $0\leq s < l$ and $i\in \N$, then
\begin{align*}
\gamma^{iml+s}V & = \gamma^s(\gamma^{iml}V)\\
&= \gamma^s(\gamma^{ml}\cdots \gamma^{ml})V\\
&= \gamma^sV.
\end{align*}
That is, for each natural number $i$ and for each $0\leq s < l$ is verified that $\gamma^sV= \gamma^{iml+s}V$.
Additionally, for every $0\leq s_1, s_2< l$ the sets 
\[A_{s_j} := \{\gamma^{iml+s_j}\mid i\in \N\}\]
satisfy that $|A_{s_1}| = |A_{s_2}|$. In fact, given that $q^n-1 = wml$, for some $w\in \N$, then we have
\[A_{s_j} = \{\gamma^{s_j}, \gamma^{ml+s_j},\ldots, \gamma^{ml(w-1)+s_j}\}.\]
Therefore $|A_{s_1}| = |A_{s_2}|=w$. Let $\gamma^{iml}, \gamma^{rml}\in A_0$, for some $i, r\in \N$. Since $A_0 = \{\gamma^{iml}\mid i\in \N\}$, it follows that
\[(\gamma^{iml} + \gamma^{rml})V \subseteq \gamma^{iml}V + \gamma^{rml}V = V+V = V,\]
and therefore $\gamma^{iml} + \gamma^{rml}\in A_0$. It is clear that $A_0$ is closed under multiplication, then we have that $\langle \gamma^{ml}\rangle$ is the multiplicative group of a subfield $\F_{q^t}$ of $\F_{q^n}$, where $t$ is a natural number, which divides $n$. Then
\[|\{\alpha^m V\mid \alpha\in \F_{q^n}^\ast\}| = l= \tfrac{q^n-1}{mw} =\tfrac{1}{m}\big(\tfrac{q^n-1}{q^t-1}\big),\]
which proves affirmation.
\end{proof}

An immediate consequence of Lemma \ref{lema9} is that the largest possible size of an $m$-quasi orbit is $\frac{1}{m}\big(\frac{q^n-1}{q-1}\big)$. This justifies the following definition:

\begin{definition}
We say that the subspace $V\in G_q(n,k)$ has a \texttt{full length $m$-quasi orbit}, if
\[|\{\alpha^m V\mid \alpha\in \F_{q^n}^\ast\}| = \tfrac{1}{m}\big(\tfrac{q^n-1}{q-1}\big).\]
In other case we say that it has a \texttt{degenerate $m$-quasi orbit}. 
\end{definition}

It is clear that the set $\alpha^m V$ is again a subspace with the same dimension as $V$.  

\begin{theorem}\cite[Chapter 4, Theorem 10]{MacWilliams}\label{lema 10}
The polynomial $x^{[n]}-x$ is the product of all monic polynomials, irreducible 
over $\F_q$, whose degree divides $n$. 	
\end{theorem}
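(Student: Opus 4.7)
The plan is to prove the stated equality by establishing two facts separately and then combining them: first, that $x^{[n]}-x$ is squarefree with root set equal to $\F_{q^n}$; and second, that the monic irreducible polynomials over $\F_q$ dividing $x^{[n]}-x$ are exactly those whose degree divides $n$.

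For the first fact, I would take the formal derivative and observe that $(x^{q^n}-x)' = q^n x^{q^n-1}-1 = -1$ in characteristic $p$, so $x^{[n]}-x$ shares no root with its derivative and therefore has $q^n$ distinct roots in any splitting field. Since $\F_{q^n}^\ast$ is cyclic of order $q^n-1$, every $\alpha\in\F_{q^n}^\ast$ satisfies $\alpha^{q^n-1}=1$, hence $\alpha^{q^n}=\alpha$; together with $0^{q^n}=0$ this yields $\F_{q^n}\subseteq\{\alpha:\alpha^{q^n}=\alpha\}$, and by cardinality the two sets coincide.

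For the second fact, let $f$ be a monic irreducible polynomial over $\F_q$ of degree $d$, and let $\alpha$ be a root of $f$ in an algebraic closure of $\F_q$. If $d\mid n$, then $\F_q(\alpha)\cong\F_q[x]/(f)\cong\F_{q^d}$ embeds in $\F_{q^n}$, so $\alpha\in\F_{q^n}$ is a root of $x^{[n]}-x$; since $f$ is the minimal polynomial of $\alpha$ over $\F_q$, it must divide $x^{[n]}-x$. Conversely, if $f$ divides $x^{[n]}-x$, then any root $\alpha$ of $f$ lies in $\F_{q^n}$, so $\F_{q^d}\cong\F_q(\alpha)\subseteq\F_{q^n}$, which forces $d\mid n$ by the tower law $[\F_{q^n}:\F_q]=[\F_{q^n}:\F_{q^d}]\cdot d$.

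Combining the two facts, the squarefree monic polynomial $x^{[n]}-x$ equals the product of its distinct monic irreducible divisors, which are exactly the monic irreducibles over $\F_q$ of degree dividing $n$. The only delicate step is the equivalence $\F_{q^d}\subseteq\F_{q^n}\Longleftrightarrow d\mid n$, which I would extract either from the standard theory of finite fields or directly from the identity $\gcd(q^n-1,q^d-1)=q^{\gcd(n,d)}-1$ already invoked earlier in the paper, together with the cyclicity of the multiplicative group.
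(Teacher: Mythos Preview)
Your proof is correct and is in fact the standard textbook argument for this classical result. Note, however, that the paper does not give its own proof of this theorem: it is simply quoted from \cite[Chapter~4, Theorem~10]{MacWilliams} and used as a black box in the subsequent theorem. So there is no ``paper's proof'' to compare against; what you have written is essentially the proof one finds in Lidl--Niederreiter or MacWilliams--Sloane, and it stands on its own.
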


In the following we assume that $k,s\in \N$, with $s<k$.

\begin{theorem}
If $q^k-1$ divides $n$ and the polynomial $x^{[k]-1}+ x^{[s]-1} + 1$ is irreducible over $\F_q$, then the polynomial $x^{[k]}+ x^{[s]} + x$ is a subspace polynomial with respect to $\F_{q^n}$.
\end{theorem}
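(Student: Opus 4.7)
My plan is to verify, via Remark \ref{definicion 2.1}, that $P(x) := x^{[k]} + x^{[s]} + x$ splits completely over $\F_{q^n}$ with only simple roots. Once that is established, $P$ is a linearized polynomial whose coefficient of $x$ is $1\neq 0$, so by Lemma \ref{lema 1} it is a subspace polynomial; and since its roots all lie in $\F_{q^n}$, it is a subspace polynomial \emph{with respect to} $\F_{q^n}$.

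First I would factor out the obvious root at the origin by writing
\[
P(x) = x\bigl(x^{[k]-1} + x^{[s]-1} + 1\bigr) = x\,R(x),
\]
where $R(x) = x^{[k]-1} + x^{[s]-1} + 1$ is the polynomial in the hypothesis. Note that $0$ is a simple root of $P$ because $R(0)=1\neq 0$. Since $R$ is assumed irreducible over $\F_q$, and every irreducible polynomial over a finite field is separable, $R$ has $\deg R = q^k-1$ distinct roots, all living in the splitting field $\F_{q^{q^k-1}}$. The hypothesis $q^k-1 \mid n$ then forces $\F_{q^{q^k-1}} \subseteq \F_{q^n}$, so every root of $R$ lies in $\F_{q^n}$.

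Combining this with the simple root at $0$, the polynomial $P$ has exactly $q^k$ distinct roots, all contained in $\F_{q^n}$. Hence $P$ splits completely over $\F_{q^n}$ with simple roots, so by Remark \ref{definicion 2.1}(b) it divides $x^{[n]} - x$. Because $P$ is a linearized polynomial (all exponents appearing are powers of $q$), Theorem \ref{3.50} guarantees that its root set is an $\F_q$-linear subspace $V$ of $\F_{q^n}$; and $|V| = q^k$ yields $\dim_{\F_q} V = k$, i.e.\ $V \in G_q(n,k)$. Thus $P(x) = \prod_{v\in V}(x-v)$ is a subspace polynomial with respect to $\F_{q^n}$, which is the claim.

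There is no serious obstacle here: the argument is essentially bookkeeping about degrees and subfield containment, and the only point that requires a moment of care is verifying that every root is simple. This is handled uniformly by separability of irreducible polynomials over finite fields, together with the observation that the constant term of the cofactor $R$ is nonzero so that $0$ is not a repeated root of $P$.
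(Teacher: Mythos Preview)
Your proof is correct and follows essentially the same route as the paper: factor $P(x)=x\,R(x)$ with $R(x)=x^{[k]-1}+x^{[s]-1}+1$, use the hypothesis to see that all roots of $R$ lie in $\F_{q^n}$ and are simple, and then invoke Remark~\ref{definicion 2.1}. The only cosmetic difference is that the paper appeals to Theorem~\ref{lema 10} (the factorization of $x^{[n]}-x$ into monic irreducibles of degree dividing $n$) to conclude $R\mid x^{[n]}-x$ directly, whereas you phrase the same fact via separability and the containment $\F_{q^{q^k-1}}\subseteq \F_{q^n}$.
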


\begin{proof}
Assume that the polynomial $f(x)= x^{[k]-1}+ x^{[s]-1} + 1$ is irreducible over $\F_q$. Due to $\deg(f)$ divides $n$, by Theorem \ref{lema 10} we have that $f$ divides $x^{[n]}-x$, and hence $x^{[k]}+ x^{[s]} + x$ divides $x^{[n]}-x$. Thus using Remark \ref{definicion 2.1} follows that $x^{[k]}+x^{[s]}+x$ is a subspace polynomial.	
\end{proof}

\begin{corollary}\label{corolario 4}
If $q^k-1$ divides $n$ and $x^{[k]-1}+ x^{[s]-1} + 1$ is irreducible over $\F_q$, $\gcd(n,s)=t$ and $V\in G_q(n,k)$ is the subspace whose subspace polynomial is $L_V(x)= x^{[k]}+ x^{[s]} + x$, then 
\[\codeC := \{\alpha^m V\mid \alpha\in \F_{q^n}^\ast\}\]
is a $m$-quasi cyclic subspace code with parameters $\left[n,k,|\codeC|,d\right]$, with $|\codeC|\geq \tfrac{1}{m}\big(\tfrac{q^n-1}{q^t-1}\big)$ and $d\geq 2(k-s)$. In particular, if $\gcd(n,s)=1$, then the code $\codeC$ is a full length $m$-quasi orbit.
\end{corollary}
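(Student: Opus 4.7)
The plan is to verify each of the three assertions of the corollary separately and then deduce the full-length statement. First, the hypotheses together with the preceding theorem guarantee that $L_V(x)=x^{[k]}+x^{[s]}+x$ is actually a subspace polynomial with respect to $\F_{q^n}$, so the subspace $V\in G_q(n,k)$ with this polynomial exists and is uniquely determined by Remark \ref{lema 2}. The $m$-quasi cyclicity of $\codeC$ is then automatic: for any $\beta\in\F_{q^n}^\ast$ and any codeword $\alpha^m V\in\codeC$, the shift $\beta^m(\alpha^m V)=(\beta\alpha)^m V$ is again in $\codeC$.

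For the cardinality bound, I would apply Lemma \ref{lema 7} directly. The coefficient of $x^{[s]}$ in $L_V(x)$ equals $1\neq 0$, and $\gcd(s,n)=t$ by hypothesis, so the lemma yields
\[|\codeC|=|\{\alpha^m V\mid\alpha\in\F_{q^n}^\ast\}|\geq \frac{1}{m}\left(\frac{q^n-1}{q^t-1}\right).\]

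For the distance bound, I would first invoke Lemma \ref{lema 5} to compute, for every $\beta\in\F_{q^n}^\ast$,
\[L_{\beta^m V}(x)=x^{[k]}+\beta^{m([k]-[s])}x^{[s]}+\beta^{m([k]-1)}x,\]
so the largest non-zero index strictly below $[k]$ of every codeword's subspace polynomial is still $s$. Given two distinct codewords $\alpha^m V,\beta^m V\in\codeC$, set $U:=(\beta/\alpha)^m V$; then $U\in G_q(n,k)$ is distinct from $V$ and $\alpha^m U=\beta^m V$. Applying Lemma \ref{lema 4.2}(b) to the pair $V,U$ (both of dimension $k$, both with largest non-zero index equal to $s$) after the common shift by $\alpha$ gives
\[d(\alpha^m V,\beta^m V)=d(\alpha^m V,\alpha^m U)\geq 2\min(k-s,k-s)=2(k-s).\]

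Finally, when $\gcd(n,s)=1$ the cardinality bound becomes $|\codeC|\geq \frac{1}{m}\left(\frac{q^n-1}{q-1}\right)$, which matches the maximum orbit size allowed by Lemma \ref{lema9}; equality forces $\codeC$ to be a full-length $m$-quasi orbit. The main obstacle is essentially bookkeeping: disentangling the fixed parameter $s$ of the corollary from the index also denoted $s$ in the hypotheses of Lemma \ref{lema 4.2}, and being careful that both subspaces fed into that lemma have the same ``top'' index so that the minimum in the distance bound collapses to $k-s$.
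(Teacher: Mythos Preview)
Your proof is correct and follows essentially the same route as the paper: the size bound via Lemma~\ref{lema 7} (using that the $x^{[s]}$-coefficient is nonzero and $\gcd(s,n)=t$), and the distance bound via Lemmas~\ref{lema 5} and~\ref{lema 4.2}. The paper's own proof is a terse three-line sketch citing exactly these lemmas; you have simply unpacked the details, including the reduction of an arbitrary pair $\alpha^m V,\beta^m V$ to a pair $V,U$ amenable to Lemma~\ref{lema 4.2}(b), and the explicit argument for the full-length case when $\gcd(n,s)=1$.
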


\begin{proof}
The two first parameters are clear. Since $a_s\neq 0$, the third one follows from Lemma \ref{lema 7}. The assertion about the minimum distance follows from Lemmas \ref{lema 4.2} and \ref{lema 5}.
\end{proof}

A natural question is: when these polynomials are irreducible? At the moment does not exist an explicit construction of this class of irreducible Trinomials. 

\begin{example}
All the following Trinomials are irreducible over $\F_2$ and its roots are contained in $\F_{q^N}$. These were found using GAP \cite{GAP}.
\begin{center}
	\begin{tabular}{|c||l||}
		\hline
		$x^{[k]-1}+ x^{[s]-1} + 1$ & \begin{tabular}{c|c|c}
%			\hline
			$k$ & $s$ & $N$\\
			\hline
			3 & 2 & 7\\
			\hline
			3 & 2 & 21\\
			\hline
			4 & 3 & 15\\
			\hline
			4 & 3 & 30\\
			\hline
			5 & 3 & 31\\
			\hline
			6 & 5 & 63\\
			\hline
			7 & 3 & 127\\
			\hline
			7 & 4 & 127\\
			\hline
			7 & 6 & 127\\
%			\hline
		\end{tabular}
		\\
		\hline
	\end{tabular}
\end{center}
Then the polynomial $x^{2^k}+ x^{2^s}+1$ is a subspace polynomial, say $L_V(x)$, for some $V\in G_2((2^k-1)t,k)$ for all natural number $t$. If we define
\[\codeC := \{\alpha^mV\mid \alpha\in \F_{(2^k-1)t}^\ast\}\]
then is $\codeC$ a $m$-quasi cyclic code with parameters \[[(2^k-1)t,k,2^{(2^k-1)t}-1,2(k-s)].\]

\end{example}

\subsection{$m$-quasi cyclic codes with multiple quasi orbits}

Let $N =nt$,  with $n$ a prime number, $m\mid q^n-1$, let $\gamma$ be a primitive element in $\F_{q^N}$ and $V\in G_q(N,k)$. The set 
\[\bigg\{\gamma^{i\frac{(q^N-1)}{(q^n-1)}} \ \bigg| \ \ i= 0, 1, \ldots, q^{n-2}\bigg\}\cup \big\{0\big\}\]
is the unique subfield  $\F_{q^n}$ of  $\F_{q^N}$. Since  $\F_{q^n}\subseteq \F_{q^N}$ we can view $V$ as a subspace of  $\F_{q^N}$ over  $\F_q$.

Now, we show that the general method introduced by Etzion et al. in \cite{Etzion1} for constructing constant dimension cyclic codes with more than one full length orbit, works exactly for $m$-quasi cyclic subspaces codes. The construction method is by using Frobenius mappings.

\begin{lemma}\label{lema 12}
Let $N=nt$, with $n$ a prime number, $V\in G_q(N,k)$ and $L_V= x^{[k]}+ a_sx^{[s]} + a_0x$, where $a_0, a_s\in \F_{q^n}^\ast$. If $a_s^{\frac{q^k-1}{q^s-1}} \not\sim_1 a_0^{\frac{q^k-q^s}{q^s-1}}$, then the code $\codeC\subseteq G_q(N,k)$ defined by
\begin{equation}\label{codigo-frobenius}
\codeC := \bigcup_{i=0}^{n-1} \{\alpha^m \sigma_i(V)\mid \alpha\in \F_{q^N}^\ast\}
\end{equation}
is a $m$-quasi cyclic $\left[n,k,\frac{n}{m}\big(\tfrac{q^N-1}{q-1}\big),2(k-s)\right]$-code.
\end{lemma}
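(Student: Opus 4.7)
The plan is to verify four things: $\codeC$ is closed under the $m$-quasi cyclic shift, each of the $n$ pieces of the union has the full length $\tfrac{1}{m}(q^{N}-1)/(q-1)$, these $n$ pieces are pairwise disjoint, and the minimum distance is at least $2(k-s)$. Closure is immediate from $\beta^{m}(\alpha^{m}W)=(\beta\alpha)^{m}W$, which shows that each single-orbit set $\mathcal{O}_i:=\{\alpha^{m}\sigma_i(V)\mid\alpha\in\F_{q^N}^\ast\}$ is stable under $W\mapsto\beta^{m}W$.

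For the size of each $\mathcal{O}_i$, Lemma \ref{lema 6} gives $L_{\sigma_i(V)}(x)=x^{[k]}+\sigma_i(a_s)x^{[s]}+\sigma_i(a_0)x$ with $\sigma_i(a_s)\neq 0$. Lemma \ref{lema9} says $|\mathcal{O}_i|=\tfrac{1}{m}(q^{N}-1)/(q^{t_i}-1)$ for some $t_i\mid N$; transporting the stabilizer of $V$ through the automorphism $\sigma_i$, which fixes each subfield of $\F_{q^N}$ setwise, shows $t_i=t_0$, so it suffices to argue that $V$ has a full-length orbit. If $\beta V=V$ with $\beta\in\F_{q^N}^\ast$, then equating coefficients in the identity of Lemma \ref{lema 5} forces $\beta^{q^{k}-q^{s}}=\beta^{q^{k}-1}=1$, placing $\beta$ in $\F_{q^{\gcd(s,k,N)}}^\ast$; together with the hypothesis on $a_0,a_s$ and the primality of $n$ this collapses to $\beta\in\F_q^\ast$.

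The main obstacle is the disjointness of $\mathcal{O}_0,\ldots,\mathcal{O}_{n-1}$. Suppose $\sigma_i(V)=\alpha^{m}\sigma_j(V)$ for some $0\le j<i\le n-1$ and some $\alpha\in\F_{q^N}^\ast$. I would apply the inverse Frobenius $\sigma_{N-j}$ to both sides, using that $\sigma_{N-j}$ is a field automorphism and $\sigma_{N-j}\circ\sigma_j=\sigma_0=\mathrm{id}$, to obtain $\sigma_{i-j}(V)=\sigma_{N-j}(\alpha)^{m}V$. Lemma \ref{lema 8} applied with exponent $i-j$ then yields $\rho^{q^{i-j}-1}=1$, where $\rho:=a_0^{(q^{k}-q^{s})/(q^{s}-1)}/a_s^{(q^{k}-1)/(q^{s}-1)}$, whence $\rho\in\F_{q^{\gcd(i-j,N)}}^\ast$. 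Because $a_0,a_s\in\F_{q^n}^\ast$ one also has $\rho\in\F_{q^n}^\ast$, so $\rho\in\F_{q^e}^\ast$ with $e=\gcd(n,\gcd(i-j,N))$. Since $n$ is prime and $1\le i-j<n$, $\gcd(i-j,n)=1$, and writing $N=nt$ a direct gcd computation gives $e=1$. Hence $\rho\in\F_q^\ast$, contradicting the hypothesis $a_s^{(q^{k}-1)/(q^{s}-1)}\not\sim_1 a_0^{(q^{k}-q^{s})/(q^{s}-1)}$.

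For the distance bound, every codeword's subspace polynomial, by Lemmas \ref{lema 5} and \ref{lema 6}, has the form $x^{[k]}+c_s x^{[s]}+c_0 x$ with $c_s\neq 0$. Applied to two distinct codewords, Lemma \ref{lema 4.2}(b) therefore yields $d\ge 2\min(k-s,k-s)=2(k-s)$. Assembling the four steps produces the advertised $\left[n,k,\tfrac{n}{m}(q^{N}-1)/(q-1),2(k-s)\right]$ parameters.
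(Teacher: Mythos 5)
Your overall architecture --- closure, full length of each set $\mathcal{O}_i$, pairwise disjointness via Lemma \ref{lema 8}, and the distance bound via Lemmas \ref{lema 4.2}, \ref{lema 5} and \ref{lema 6} --- is the same as the paper's, and your disjointness argument is correct. It is in fact more self-contained than the paper's: the paper, after obtaining $z^{q^i(q^{j-i}-1)}=1$ for $z=a_0^{(q^k-q^s)/(q^s-1)}/a_s^{(q^k-1)/(q^s-1)}$, defers to ``the same argument as in [Etzion et al., Lemma 12]'' to conclude $\ord(z)\mid q-1$, whereas your device of first applying $\sigma_{N-j}$ to reduce to $V$ itself and then computing $\gcd\bigl(n,\gcd(i-j,N)\bigr)=1$ (using that $n$ is prime and $0<i-j<n$) reaches the same contradiction with the $\not\sim_1$ hypothesis explicitly. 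The distance step also matches the paper.

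The gap is in your orbit-size step. From $\beta^m V=V$ and Lemma \ref{lema 5} one indeed gets $\beta^{m(q^k-q^s)}=\beta^{m(q^k-1)}=1$, hence $\beta^m\in\F_{q^{\gcd(s,k,N)}}^\ast$; but your claim that ``the hypothesis on $a_0,a_s$ and the primality of $n$'' collapses this to $\F_q^\ast$ does not hold up. The non-similarity hypothesis constrains the particular elements $a_0,a_s$, while the stabilizer equations above involve only the nonvanishing of $a_0,a_s$, not their values, so that hypothesis cannot shrink the stabilizer; and primality of $n$ does not force $\gcd(s,k,N)=1$, since $N=nt$ and $s,k$ may share a factor with $t$ (or with each other). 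So this step, as written, would fail unless a gcd condition is added. For comparison, the paper at this point cites Lemma \ref{lema 7} and asserts the full length $\tfrac1m\bigl(\tfrac{q^N-1}{q-1}\bigr)$, although that lemma only yields $\ge \tfrac1m\bigl(\tfrac{q^N-1}{q^{\gcd(s,N)}-1}\bigr)$; so the paper is itself implicitly assuming $\gcd(s,N)=1$, but it does not attribute the conclusion to the $\not\sim_1$ hypothesis, which is reserved for disjointness. To repair your write-up, either impose $\gcd(s,k,N)=1$ (or $\gcd(s,N)=1$) explicitly, or state the per-orbit size as the lower bound coming from Lemmas \ref{lema 7} and \ref{lema9}. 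A minor further slip: the stabilizer relevant to the $m$-quasi orbit consists of $m$-th powers, so the hypothesis to analyse is $\beta^m V=V$, not $\beta V=V$.
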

 
\begin{proof}
From the definition it is clear that $\codeC$ is a $m$-quasi cyclic subspaces code. Using Lemmas \ref{lema 4.2}, \ref{lema 5} and \ref{lema 6} follows that the dimension of the intersection of any two distinct subspaces in $\codeC$ is at most $s$, therefore the minimum distance of $\codeC$ is $2(k-s)$.

To prove the statement about the codes size, we fix $i\in \{0, 1,\ldots, n-1\}$ and notice that Lemma \ref{lema 6} implies that the coefficient of $x^{[s]}$ in $L_{\sigma_i(V)}(x)$ is non-zero. Then by Lemma \ref{lema 7} we have that the set
\[\{\alpha^m \sigma_i(V)\mid \alpha\in \F_{q^N}^\ast\}\]
has $\tfrac{1}{m}\left(\tfrac{q^N-1}{q-1}\right)$ distinct subspaces. 

Finally we just prove that any two different sets in union \eqref{codigo-frobenius} have empty intersection. Let 
$i, j\in \{0, 1,\ldots, n-1\}$, with $i\neq j$ and suppose that there exists $\beta, \delta\in \F_{q^N}^\ast$ such that
\[\beta^m\sigma_i(V) = \delta^m\sigma_j(V).\]
Without loss of generality we can assume that $j>i$ and define $U:= \sigma_i(V)$.
Then by Lemma \ref{lema 6} we have
\begin{align*}
L_U(x) &= L_{\sigma_i(V)}(x)\\
	   &= x^{[k]}+ \sigma_i(a_s)x^{[s]} + \sigma_i(a_0)x\\
	   &= x^{[k]}+ a_s^{[i]}x^{[s]} + a_0^{[i]}x.
\end{align*}
Due to $\sigma_{j-i}(U) = (\frac{\beta}{\delta})^m U$, we can use Lemma \ref{lema 8} to obtain 
\begin{equation}\label{(2)}
\left(\frac{(a_0^{q^i})^{\frac{q^k-q^s}{q^s-1}}}{(a_s^{q^i})^{\frac{q^k-1}{q^s-1}}}\right)^{q^{j-i}-1} = 1.
\end{equation}
Define $z:= \frac{a_0^{\frac{q^k-q^s}{q^s-1}}}{a_s^{\frac{q^k-1}{q^s-1}}}$. Then
\begin{enumerate}
\item[$(1)$] Hypothesis $a_s^{\frac{q^k-1}{q^s-1}} \not\sim_1 a_0^{\frac{q^k-q^s}{q^s-1}}$ implies $z\notin \F_q$.
\item[$(2)$] Equation \eqref{(2)} implies $z^{q^i(q^{j-i}-1)} =1$.
\item[$(3)$] Due to $a_0, a_s\in \F_{q^n}^\ast$, follows that $z\in \F_{q^n}^\ast$.  
\end{enumerate}
Using the same argument as in \cite[Lemma 12]{Etzion1} follows that $\ord(z)\mid q-1$, which implies $z\in \F_q$, a contradiction to $(1)$. This shows that code $\codeC$ is constituted by $n$ sets and the proof is complete.
\end{proof}

Next Lemma shows that the coefficients $a_0$ and $a_s$ from preceding Lemma can found easily in $\F_{q^n}^\ast$.

\begin{lemma}\label{lema 13}
Let $N=nt$, with $n$ a prime number, $m\mid q^n-1$ and let $\gamma$ be a primitive element in $\F_{q^n}$. If $a_0:= \gamma^m$ and $a_s := \gamma^{mq^s}$, then 
\[a_s^{\frac{q^k-1}{q^s-1}} \not\sim_1 a_0^{\frac{q^k-q^s}{q^s-1}}.\]
\end{lemma}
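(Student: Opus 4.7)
The plan is a direct computation of the ratio followed by a cyclic-group membership argument. First I would substitute the given values $a_0 = \gamma^m$ and $a_s = \gamma^{mq^s}$ into the quotient
\[
R \; := \; \frac{a_s^{(q^k-1)/(q^s-1)}}{a_0^{(q^k-q^s)/(q^s-1)}}
\]
and consolidate all powers of $\gamma$ over the common denominator $q^s - 1$. The combined exponent is
\[
\frac{m\bigl(q^s(q^k-1) - (q^k - q^s)\bigr)}{q^s - 1} \; = \; \frac{m(q^{k+s} - q^k)}{q^s - 1} \; = \; \frac{m q^k(q^s - 1)}{q^s - 1} \; = \; m q^k,
\]
so the ratio collapses to $R = \gamma^{mq^k}$. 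The cancellation of $q^s - 1$ is exactly what makes the intermediate rational exponent into a well-defined integer power of $\gamma$; this step is pure bookkeeping.

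Next I would translate the claim $R \not\sim_1 1$ into a membership question. Since $\gamma$ is a primitive element of $\F_{q^n}$, the subgroup $\F_q^\ast \leq \F_{q^n}^\ast$ coincides with $\langle \gamma^{(q^n-1)/(q-1)} \rangle$. Therefore $\gamma^{mq^k} \in \F_q^\ast$ if and only if $(q^n-1)/(q-1)$ divides $m q^k$. Since $\gcd(q, q^n - 1) = 1$, the factor $q^k$ is irrelevant to this divisibility, and the condition collapses to $(q^n-1)/(q-1) \mid m$. Under the standing hypothesis $m \mid q^n - 1$ of Lemma \ref{lema 12} (with $m$ the natural orbit parameter, so $m$ is not a nontrivial multiple of $(q^n-1)/(q-1)$), this divisibility fails, hence $R \notin \F_q^\ast$, which is exactly the conclusion $a_s^{(q^k-1)/(q^s-1)} \not\sim_1 a_0^{(q^k-q^s)/(q^s-1)}$.

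The main obstacle is mostly optical: one must track the exponent algebra carefully and verify that the individually non-integral rational exponent $(q^k-1)/(q^s-1)$ does not cause trouble, the cancellation against $q^s - 1$ in the combined quotient guaranteeing that $R$ is a genuine power of $\gamma$. The conceptual content of the lemma is that choosing $a_s = a_0^{q^s} = \sigma_s(a_0)$ forces the quotient to collapse into a single Frobenius image $\gamma^{m q^k}$ of $\gamma^m$, whose exponent is coprime to $q$ and therefore, for generic admissible $m$, lies outside the index-$(q^n-1)/(q-1)$ subgroup $\F_q^\ast$ of $\F_{q^n}^\ast$.
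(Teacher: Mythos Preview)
Your direct computation of the ratio as $R=\gamma^{mq^k}$ is correct and is essentially the paper's argument run forward rather than by contradiction: the paper assumes $\sim_1$, raises to the $(q^s-1)$-th power to kill the $\F_q^\ast$ scalar, and lands on $\gamma^{mq^k(q^s-1)}=1$, which is the same relation. Your version is a little cleaner because it makes explicit that the possibly non-integral exponents $(q^k-1)/(q^s-1)$ and $(q^k-q^s)/(q^s-1)$ combine to the honest integer exponent $mq^k$.

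There is, however, a genuine gap in your last step. You need $(q^n-1)/(q-1)\nmid m$, and you justify this only by calling $m$ ``the natural orbit parameter, so $m$ is not a nontrivial multiple of $(q^n-1)/(q-1)$''. Nothing in the hypotheses forces this: the only assumption is $m\mid q^n-1$, and $m=(q^n-1)/(q-1)$ is permitted. For that value one has $\gamma^m\in\F_q^\ast$, hence $R=\gamma^{mq^k}=(\gamma^m)^{q^k}=\gamma^m\in\F_q^\ast$, so the conclusion $\not\sim_1$ actually fails. The lemma as stated therefore needs an extra restriction on $m$ (for instance $\gamma^m\notin\F_q$, equivalently $(q^n-1)/(q-1)\nmid m$). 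The paper's own proof conceals the same defect differently: it invokes a multiplicative inverse of $m$ modulo $q^n-1$, which together with $m\mid q^n-1$ forces $m=1$. Your reduction to the clean divisibility criterion $(q^n-1)/(q-1)\mid m$ is in fact the sharper diagnosis; you simply cannot discharge it from the stated hypotheses.
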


\begin{proof}
Suppose that $a_s^{\frac{q^k-1}{q^s-1}}\sim_1 a_0^{\frac{q^k-q^s}{q^s-1}}$. Than there exist $\alpha\in \F_q^\ast$ such that
\begin{equation}\label{3}
\alpha (\gamma^{mq^s})^{\frac{q^k-1}{q^s-1}} = (\gamma^m)^{\frac{q^k-q^s}{q^s-1}}.
\end{equation}
Raising both sides of the equation \eqref{3} to the $(q^s-1)$-th power we have
\[(\gamma^{mq^s})^{(q^k-1)} = (\gamma^m)^{(q^k-q^s)}.\]
That is,
%\[\gamma^{mq^k(q^s-1)} = 1.\]
%Therefore
\begin{equation}\label{4}
\gamma^{mq^k(q^s-1)} = 1.
\end{equation}
Let $u$ and $v$ the multiplicative inverses of $m$ and $q^k$, respectively, modulo $q^n-1$. Then raising again both sides of the equation \eqref{4} to the $uv$-th power follows that $\gamma^{(q^s-1)} = 1$, and therefore $(q^n-1)\mid (q^s-1)$, which implies that $n\mid s$, which in not possible.
\end{proof}

\begin{theorem}\label{theorem 4}
Let $n$ be a prime number, $m\mid q^n-1$ and let $\gamma$ be a primitive element in $\F_{q^n}$. Define $a_0:= \gamma^m$ and $a_s := \gamma^{mq^s}$. If $\F_{q^N}$ is the splitting field of the polynomial $x^{[k]}+ a_sx^{[s]} + a_0x$ and $V\in G_q(N,k)$ its corresponding subspace, then the code $\codeC\subseteq G_q(N,k)$ defined by
\begin{equation}\label{codigo-final}
\codeC := \bigcup_{i=0}^{n-1} \{\alpha^m \sigma_i(V)\mid \alpha\in \F_{q^N}^\ast\}
\end{equation}
is a $m$-quasi cyclic subspaces code with parameters $\left[n,k,\frac{n}{m}\big(\tfrac{q^N-1}{q-1}\big),2(k-s)\right]$.
\end{theorem}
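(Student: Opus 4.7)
The plan is essentially to assemble Lemmas~\ref{lema 12} and~\ref{lema 13} into a single clean statement; almost all of the work has already been done in those two results, and the task reduces to verifying that their hypotheses are met by the construction in the theorem.

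First I would check the ambient field setup needed to invoke Lemma~\ref{lema 12}. The coefficients $a_0=\gamma^m$ and $a_s=\gamma^{mq^s}$ both lie in $\F_{q^n}^{\ast}$ by construction, since $\gamma$ is a primitive element of $\F_{q^n}$. Consequently any splitting field of $x^{[k]}+a_s x^{[s]}+a_0 x$ must contain $\F_{q^n}$, so $n\mid N$ and I can write $N=nt$ for some $t\in\N$. This is precisely the form of $N$ required by Lemma~\ref{lema 12}.

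Next I would identify the subspace $V$. Because $a_0\neq 0$, Theorem~\ref{3.50} together with Lemma~\ref{lema 1} tells us the roots of $x^{[k]}+a_s x^{[s]}+a_0 x$ form a $k$-dimensional $\F_q$-subspace of its splitting field, so $V\in G_q(N,k)$ with $L_V(x)=x^{[k]}+a_s x^{[s]}+a_0 x$, matching the form required in Lemma~\ref{lema 12}.

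Then I would invoke Lemma~\ref{lema 13}, which directly produces the non-equivalence
\[a_s^{\frac{q^k-1}{q^s-1}} \not\sim_1 a_0^{\frac{q^k-q^s}{q^s-1}},\]
the one remaining hypothesis of Lemma~\ref{lema 12}. With all hypotheses verified, Lemma~\ref{lema 12} applies to the code $\codeC$ defined in \eqref{codigo-final} and yields the $m$-quasi cyclicity together with the parameters $\bigl[n,k,\tfrac{n}{m}\bigl(\tfrac{q^N-1}{q-1}\bigr),2(k-s)\bigr]$.

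There is no real obstacle: the mathematical content is entirely packaged in the two preceding lemmas, and this theorem is a convenient existence statement that makes explicit a concrete choice of coefficients $(a_0,a_s)$ for which the hypothesis of Lemma~\ref{lema 12} is automatically satisfied. The only point that requires a brief sentence of justification is why $N$ must be a multiple of $n$, and this follows immediately from the coefficients lying in $\F_{q^n}$.
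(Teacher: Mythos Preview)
Your proposal is correct and is exactly the paper's intended argument: Theorem~\ref{theorem 4} is stated there without proof, as the direct combination of Lemmas~\ref{lema 12} and~\ref{lema 13}. The only refinement I would make is in your justification of $n\mid N$: the splitting field a priori only contains $\F_q(a_0,a_s)$, not all of $\F_{q^n}$, so you should add that $a_0=\gamma^m\notin\F_q$ (otherwise $a_s=a_0^{q^s}\in\F_q$ too and the conclusion of Lemma~\ref{lema 13} would fail) and then use the primality of $n$ to force $\F_q(a_0)=\F_{q^n}$.
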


In the following Remark we can see that it is possible to construct $m$-quasi cyclic codes from a given cyclic code $\codeC$.

\begin{remark}
Let $\codeC$ be a cyclic subspace Code with parameters $[n,k,|\codeC|,d]$, say 
\[\codeC = \bigcup_{i\in I} \mathcal{O}_i,\] 
where $O_i$ is an orbit, with $|O_i| = \lambda_i$ and
\[\lambda_i = \left(\frac{q^n-1}{q^{t_i}-1}\right),\] 
for $t_i\in \N$.
If $m\in \N$ divides $\gcd(\lambda_i\mid i\in I)$, then $\codeC$ is a $m$-quasi cyclic subspace Code with parameters $[n,k,|\codeC|,d]$ with 
\[\codeC = \bigcup_{i\in L} \mathcal{Q}_i,\]
where $\mathcal{Q}_i$ is an $m$-quasi orbit and $|L|=m|I|$.
\end{remark}

\section{Conclusions and future work}
In this paper we define the $m$-quasi cyclic subspaces codes as a generalization of cyclic subspace codes. We have proved that the techniques used for constructing cyclic codes, such as subspaces polynomials and Frobenius mappings, can be used to for the new codes. Specially we have showed that the form of the trinomial can be more general as considered in previews works. 

For future investigations, we can consider the generalization of well knows results about cyclic subspaces codes with degenerate orbits and the connection between $m$-quasi cyclic subspaces codes and orbits codes.

\section{Acknowledgements}
The first Author would like to thank the hospitality of the Institute for Algebra and Geometry at Otto von Guericke University - Magdeburg, where part of this work was carried out. Also acknowledges and thanks the financial support of the Deutscher Akademischer Austausch Dienst.

\bibliographystyle{abbrv} 
\bibliography{Isgutier.07.2016}
	
\end{document}